\title{\bfseries Towards Optimal Running Times for Optimal Transport}
\author{Jose Blanchet\footnote{J. Blanchet gratefully acknowledges support from the following NSF grants 1915967, 1820942, 1838576.} \\
\textsf{Stanford University}\\
\texttt{jose.blanchet@stanford.edu}\\
\and
Arun Jambulapati \\
\textsf{Stanford University} \\
\texttt{jmblpati@stanford.edu} \\
\and
\and
Carson Kent \\
\textsf{Stanford University} \\
\texttt{crkent@stanford.edu} \\
\and
Aaron Sidford \\
\textsf{Stanford University} \\
\texttt{sidford@stanford.edu} \\
}
\newcommand{\Rnonneg}{\mathbb{R}_{\geq 0}}
\newcommand{\defeq}{\stackrel{\mathrm{def}}{=}}
\newcommand{\spchain}{\left(M^{(1)}, \ldots, M^{(d)}; F_1, \ldots, F_{d-1}\right)}
\newcommand{\spmem}{\left(M^{(t)}, F_t\right)}
\newtheorem{corollary}{Corollary}
\theoremstyle{definition}
\newtheorem{definition}{Definition}
\date{}
\begin{document}

\maketitle
\thispagestyle{empty}

\begin{abstract}
In this work, we provide faster algorithms for approximating the optimal transport distance, e.g. earth mover's distance, between two discrete probability distributions $\mu, \nu \in \Delta^n$. Given a cost function $C : [n] \times [n] \to \mathbb{R}_{\geq 0}$ where $C(i,j) \leq 1$ quantifies the penalty of transporting a unit of mass from $i$ to $j$, we show how to compute a coupling $X$ between $r$ and $c$ in time $\widetilde{O}\left(n^2 /\epsilon \right)$ whose expected transportation cost is within an additive $\epsilon$ of optimal. This improves upon the previously best known running time for this problem of $\widetilde{O}\left(\text{min}\left\{ n^{9/4}/\epsilon, n^2/\epsilon^2 \right\}\right)$.

We achieve our results by providing reductions from optimal transport to canonical optimization problems for which recent algorithmic efforts have provided nearly-linear time algorithms. Leveraging nearly linear time algorithms for solving packing linear programs and for solving the matrix balancing problem, we obtain two separate proofs of our stated running time. Further, one of our algorithms is easily parallelized and can be implemented with depth $\widetilde{O}(1/\epsilon)$.

Moreover, we show that further algorithmic improvements to our result would be surprising in the sense that any improvement would yield an $o(n^{2.5})$ algorithm for \textit{maximum cardinality bipartite matching}, for which currently the only known algorithms for achieving such a result are based on fast-matrix multiplication.


\end{abstract}

\newpage

\section{Introduction}
\label{sec:intro}
In this paper, we consider the \textit{discrete optimal transportation problem}.
That is, given two vectors $r$ and $c$ in the $n$-dimensional probability simplex $\Delta^n$,
 we seek
  to compute a coupling $X \in \Delta^{n\times n}$ between $r$ and $c$ such that, for a given, non-negative
  cost function $C : [n] \times [n] \to \Rnonneg$ the expected cost with respect to
  $X$ is minimized. Due to \cite{KantorovitchProb}, this problem has a relatively simple expression as a linear program, namely
\begin{equation} \label{eq:Primal}
\begin{gathered}
  \underset{X \in \Uc(r,c)}{\min} \iprod{C}{X}
\text{ where }
  \Uc(r,c) := \left\{ X \in \Rnonneg^{n \times n} : X\bones = r, X^T \bones = c \right\} ~,
\end{gathered}
\end{equation}
$\iprod{\cdot}{\cdot}$ is the element-wise inner product, $X$ denotes our
\textit{coupling}/\textit{transportation plan} between $r$ and $c$, and $C\in
\Rnonneg^{n \times n}$ is our given cost function expressed as a matrix.
In this paper, we focus on computing additive
$\epsilon$-optimal solutions to \eqref{eq:Primal}, i.e.
 $\hat{X} \in \Uc(r,c)$ such that
\begin{equation} \label{eq:epsOptim}
  \iprod{C}{\widehat{X}} \leq \underset{X \in \Uc(r,c)}{\min} \iprod{C}{X} + \epsilon
\end{equation}

The computation of such solutions, both for discrete distributions $r,c$
 and for distributions over more general metric spaces, is playing an
increasing role in varied tasks throughout machine learning and statistics.
Recent applications in unsupervised learning \cite{UL1}, computer
vision \cite{CV1, CV2},
distributionally-robust optimization \cite{DRO3, DRO1, DRO2}, and statistics \cite{OTStats1, OTStats2} all leverage the ability to compute solutions of
\eqref{eq:Primal} or it's continuous analogues. Moreover, these applications
have created a need for fast (nearly-linear\footnote{We consider nearly-linear
time to be any complexity which is of input size $O(n^2)$ after neglecting factors
in $\epsilon$ and logarithms in $n$.} time) algorithms for
\eqref{eq:Primal} in settings
where the cost function $C$ is quite general-- for instance, in the case where $C$ does not
satisfy metric assumptions.

As a consequence, recent efforts in the fields of optimization and machine
learning \cite{CuturiSinkhorn, WeedSinkhorn, BachSGD, SimplerSink,
  BetterSinkAnaly} have focused on establishing nearly-linear time guarantees through the
development of a sequence of new iterative algorithms for \eqref{eq:Primal}.
This has led to a sequence of increasingly sharper complexity
bounds for \eqref{eq:Primal}.

In this paper we shed light on the complexity of \eqref{eq:Primal} by giving a pair of \textit{simple} reductions from optimal transport to canonical problems in
theoretical computer science, namely \emph{packing linear programming} and
\emph{matrix scaling}. Through these reductions we provide new
algorithms for \eqref{eq:Primal} with improved asymptotic running times to
previous methods\footnote{During the final revision process for this work, a
  paper \cite{ConcurrentPaper} offering partially overlapping results
  was published to ArXiv. This concurrent work constitutes an independent
  research effort. The result which is shared by \cite{ConcurrentPaper} and this
  work is the serial, randomized running time for \eqref{eq:Primal} that is
  obtained in Sections \ref{seq:packing} and \ref{sec:box-constrained} of this
  paper and Theorem 2 of \cite{ConcurrentPaper}. Indeed, a reduction to
  packing LPs which is similar to the one given in Section \ref{seq:packing}
  appears in \cite{ConcurrentPaper}. \cite{ConcurrentPaper} also appeals to further results concerning packing LP solvers and an additional reduction from \eqref{eq:Primal}
  to mixed packing and covering LPs in order to provide deterministic and
  parallel running times for \eqref{eq:Primal} which do not appear in this
  paper-- see Theorem 2 in \cite{ConcurrentPaper}.
  To preserve the independence of these two works following their appearance on
  ArXiv, only edits for clarity were made to the original version of this
  manuscript during preparation-- with one notable exception. The parallel
  complexity for \eqref{eq:Primal} which appears
  in Section~\ref{sec:box-constrained} was added to highlight the difference between
  the reduction of Section~\ref{sec:box-constrained} and the reductions obtained
  in \cite{ConcurrentPaper}-- indeed, in the case of parallel, randomized
  running time, the result of Section~\ref{sec:box-constrained} improves upon 
  \cite{ConcurrentPaper} by a factor of $1/\epsilon$.}\addtocounter{footnote}{-1}\addtocounter{Hfootnote}{-1}. Moreover, we show that these running times cannot be further improved without a major breakthrough in algorithmic graph theory.


\subsection{Contributions and Overview}
The contribution of this paper is two-fold. First, we exhibit two separate
algorithms for computing an $\epsilon$-approximate solution to \eqref{eq:Primal}
in $\widetilde{O}\left(n^2\norm{C}_{\max}/\epsilon \right)$ time.
Throughout, we use  $\widetilde{O}$ to hide logarithmic factors in $n$ and
$\epsilon$ and use $\norm{C}_{\max}$ to denote the largest entry of
$C$. This improves upon the following previous best known complexity for this problem\footnotemark
\begin{equation*}
\widetilde{O}\left(\text{min}\left\{ \frac{n^{9/4}\sqrt{\norm{C}_{\max}}}{\epsilon},
\frac{n^2 \norm{C}_{\max}}{\epsilon^2} \right\}\right)
\hspace{0.1in} \text{\cite{BetterSinkAnaly}}
\end{equation*}
Additionally, one of our algorithms achieves
$\widetilde{O}\left(\norm{C}_{\max}/\epsilon\right)$ parallel depth-- an
improvement on the previously best known parallel depth by a factor of $1/\epsilon$.

These algorithms are derived via black-box reductions to two canonical problems in
theoretical computer science which can be solved using powerful iterative
methods. The first of our reductions is to a standard packing linear program
and the second of our reductions is to the so-called matrix scaling problem.
\begin{definition}[Packing Linear Program]\label{def:packLP}
  A packing linear program is a linear program of the form
\begin{equation}\label{eq:PackLP}
  V_* = \underset{x \in \Rnonneg^l}{\max} \left\{d^Tx : Ax \leq b \right\}
\end{equation}
for $b \in \Rnonneg^m$, $d\in \Rnonneg^l$, and $A \in \Rnonneg^{m \times l}$. We
say that $x_{\epsilon} \in \Rnonneg^m$ is an $\epsilon$-approximate solution for
\eqref{eq:PackLP} if $Ax_\epsilon \leq b$ and $d^Tx_\epsilon \geq (1-
\epsilon)V_*$.
\end{definition}
\begin{definition}[Matrix scaling]\label{def:MatScal}
  Let $A$ be a non-negative matrix and $r,c \in \Rnonneg^n$ be vectors such that
  $\sum_{i=1}^nr_i = \sum_{i=1}^n c_i$ and $\norm{A}_{\max},\norm{r}_{\max},
  \norm{c}_{\max} \leq 1$. Two non-negative diagonal matrices $X,Y$
  are said to \emph{$(r,c)$-scale $A$} if the matrix $B = XAY$ satisfies $B\bones = r$ and
  $B^T\bones = c$.

  If, instead, $\norm{B\bones - r}_1 + \norm{B^T\bones - c}_1 \leq
  \epsilon$ we say that \emph{$X,Y$ $\epsilon-$approximately $(r,c)$-scale $A$}.
  The \emph{matrix scaling problem} is to compute non-negative diagonal matrices $X,Y$ that $\epsilon$-approximately $(r,c)$-scale $A$, provided such matrices exist.
\end{definition}

The techniques used to achieve these reductions are relatively standard; the primary benefit of our result,
beyond the gain of a $O\left(1 / \epsilon\right)$ in complexity, is that
it clarifies the optimal transport problem and exposes its
relationship to two recent algorithmic breakthroughs in theoretical computer
science.

The second contribution of this paper is the demonstration that the above
running time of $\widetilde{O}\left(n^2 / \epsilon \right)$ is
unimprovable by the approach of this paper or others
\cite{BachSGD, WeedSinkhorn, BetterSinkAnaly, MoreC1, MoreC2, MoreC3} for
solving \eqref{eq:Primal}, barring a major breakthrough on a long-standing open problem in combinatorial optimization. More formally, we show that further running time improvements beyond those achieved in this paper would be surprising as they would yield running times for maximum cardinality bipartite matching, which currently are only known to be achievable using fast matrix multiplication.
Given the significant recent attention optimal transport has received
in machine learning and statistics, this
hardness result contributes significantly to clarifying why further
algorithmic improvements for \eqref{eq:Primal} are difficult and additional
problem assumptions may be need to obtain better performance.

As a road-map for the reader, after covering previous work in Section~\ref{sec:prev} and preliminaries in Section~\ref{sec:prelim}, in Section~\ref{seq:packing} we give a reduction from \eqref{eq:Primal} to
a packing linear program (LP) and then show how a recently-developed fast solver for
packing LPs \cite{OrechPacking} can be applied to yield our desired sequential run-time.
In Section~\ref{sec:box-constrained} we give a reduction from \eqref{eq:Primal} to
matrix scaling and then provide our second algorithm, which obtains both our
stated run-time and stated parallel depth. The surprising fact that we can
recover the same overall complexity via these very
different approaches then motivates Section \ref{sec:bimatch} where we prove
our hardness reduction to maximum cardinality bipartite matching.


\section{Previous Work}
\label{sec:prev}
In this paper, we focus on the case of obtaining
nearly-linear running time results for \eqref{eq:Primal}. While we could consider solving \eqref{eq:Primal} as a general linear program,
any approaches involving the fastest known methods (e.g \cite{SidLeeSolv} via
Laplacian system solvers or \cite{SidInvMain} for generic solvers) would be
insufficient for our stated goal since they currently have running time at least $\Omega(n^{2.5})$ for \eqref{eq:Primal}.

Outside of such generic solvers and within the scope of previous algorithms which
achieve nearly-linear running time (or better) for
\eqref{eq:Primal}, contemporary literature comprises two veins. The first vein,
encompasses those algorithms which impose further conditions on the costs of \eqref{eq:Primal} in
order to create a fast computational method for a more restricted subclass of
applications. Examples in this line of work are \cite{MoreC1, MoreC2, MoreC3},
where nearly-linear run-times are obtained, but at the expense of assuming that the cost
matrix $C$ is induced by a metric--
or, in the latter case, by a low dimensional $l_p$ metric. For the purposes of
this paper, we will only make positivity/boundedness
assumptions on our costs (as metric or related assumptions on $C$ can often be
violated in practice). Thus, this line of inquiry is less relevant for our efforts.

The second vein of results, however, is more directly related to the algorithm that we
will present in Section \ref{sec:box-constrained} and stems from the use of
entropy-regularization to solve \eqref{eq:Primal}. Beginning with the work of
\cite{CuturiSinkhorn}, this line of research \cite{BachSGD, WeedSinkhorn,
  SimplerSink, BetterSinkAnaly} essentially centers around applying a particular
 iterative technique, such as alternating minimiziation (Sinkhorn/RAS) or an
 accelerated first order method (APDAGD), to solve the dual of an entropy-regularized version
of \eqref{eq:Primal}. As shown in Table \ref{tbl:complex}, this leads to different
approaches for solving \eqref{eq:Primal} in nearly-linear time. It is worth noting that the procedure which appears in
  Section \ref{sec:box-constrained} is tangentially alluded to in \cite{BetterSinkAnaly}, but no derivation or
  concrete running times were given.

\begin{table}
  \caption{\textbf{Running times for computing $\epsilon$-optimal solutions of
      \eqref{eq:Primal}:}  In the table, $\widetilde{O}$ hides polylogarithmic
    factors in $\epsilon, n$. All results except for the interior point method
    also explicitly hide linear dependence on the norm of the cost matrix $\norm{C}_\infty$} \label{tbl:complex}
  \centering
  \begin{tabular}{|l | c | r|}
    \hline
    Algorithm & Running Time & Paper \\ \hline
    Interior Point & $\widetilde{O}\left( n^{2.5} \right)$ & \cite{SidLeeSolv}\\
    Sinkhorn/RAS & $\widetilde{O}\left( \frac{n^2}{\epsilon^2} \right)$ & \cite{BetterSinkAnaly}\\
    APDAGD & $\widetilde{O}\left(\text{min}\left\{ \frac{n^{9/4}}{\epsilon},
\frac{n^2}{\epsilon^2} \right\}\right)$ & \cite{BetterSinkAnaly}\\
    \hline
    Box-constrained Newton and Packing LP reduction &
                                                      $\widetilde{O}\left(\frac{n^2}{\epsilon}\right)$
                             & \textbf{This paper}\\ \hline
  \end{tabular}
\end{table}


\section{Preliminaries}
\label{sec:prelim}
In this section, we define notation and several, canonical assumptions
concerning \eqref{eq:Primal} that will be relevant for the subsequent reductions.

First, we denote the set of non-negative real numbers by $\Rnonneg$, the set of
integers $\{1,\ldots, n\}$ by $[n]$, and the $n$ dimensional probability simplex
by $\Delta^n = \{x \in \Rnonneg : \sum_{i \in [n]} x_i = 1\}$. Correspondingly, let
$\Delta^{n\times n}= \{x \in \Rnonneg^{n \times n} : \bones^T X \bones = 1\}$ where
$\bones$ is the all ones vector. Given a set $S \subseteq [n]$ and $r \in
\Delta^n$ define $r_{|S}$ to be the conditional distribution induced
by $r$ given $S$. Denote the product
distribution of $r,c \in \Delta^n$ by $r \otimes c \in \Delta^{n \times n}$.

For $A \in \R^{n\times n}$, we define $\norm{A}_{\max}$ to be maximum modulus of
any element of $A$. Further, denote the entry-wise exponential of $A$ by
$e^A$ and for $A \in \Rnonneg^{n\times n}$ define
\begin{equation*}
  H(A) = -\sum_{i,j=1}^n A_{i,j}\left(  \log A_{i,j} - 1\right)
\end{equation*}
to be the (entry-wise) matrix entropy. For two matrices $A,B \in \R^{n\times n}$
we denote the Frobenius inner product by $\iprod{A}{B} = \sum_{i,j\in
  [n]}A_{i,j}B_{i,j}$.

We will refer to the linear program \eqref{eq:Primal}
as the optimal transport problem, Kantorovitch problem, or primal. As is standard, the cost matrix $C \in \Rnonneg^{n\times n}$ has also been assumed to be
non-negative and the marginals have been taken to be strictly
positive ($r, c > 0$). Note,
while we have implicitly assumed that the marginals $r,c \in \Delta^n$ have the same
dimension, this has been done for the sake of exposition and the complexities
will suitably generalize for $r$ and $c$ of differing dimensions-- i.e. our
running times will become $\widetilde{O}\left(mn / \epsilon \right)$ for
$r$ of dimension $m$ and $c$ of dimension $n$.


\section{Solving by Packing LP Algorithms}\label{seq:packing}
In this section, we give a procedure for computing an $\epsilon$-optimal
solution to the optimal transport problem in $\widetilde{O}\left(n^2
  \norm{C}_{\max} / \epsilon \right)$ time.
To obtain our reduction, consider solving the linear program:
\begin{equation}\label{eq:LPForm}
\begin{split}
  \underset{X \in \Kc(r,c)}{\max} \iprod{B}{X}&  \\[0.5em]
  \Kc(r,c) := \left\{ X \in \R^{n\times n}_+ : X\bones \leq r, X^T \bones \leq c
  \right \}\hspace{2em} & \hspace{2em} B := \norm{C}_{\max}\bones \bones^T - C
\end{split}
\end{equation}
In other words, we turn the minimization problem \eqref{eq:Primal} into a
maximization problem by reversing the sign of $C$ while adding
a constant of $\norm{C}_{\max}$ to the constraint matrix to keep the new cost
matrix, $B$, non-negative. This allows us to just solve
under upper bound constraints, rather than both upper and lower bound constraints, on the row and column
sums of $X$. Indeed, the new objective encourages using $X$ to make the row and column constraints tight while
still minimizing the original cost. Furthermore, since $B$ is an entry-wise,
uniform perturbation of $C$ by $\norm{C}_{\max}$, \eqref{eq:LPForm} will
maintain the same set of optimal solutions as \eqref{eq:Primal} while only perturbing
the objective function by an additive $\norm{C}_{\max}$ term-- since $\iprod{X}{\bones \bones^T} = \bones^TX\bones = 1$.

Formally, we first show how to round solutions of \eqref{eq:LPForm} to
solutions of \eqref{eq:Primal}.
\begin{lemma} \label{lem:LPRounding}
  Suppose $X\in \Rnonneg^{n\times n}$ satisfies $X\bones \leq r$ and $X^T \bones
  \leq c$. Then, there exists a matrix $D\in \Rnonneg^{n\times n}$ (which can be
  trivially computed in $O(n^2)$ time) such that $Y = X+D$ satisfies $Y\bones =
  r$ and $Y^T\bones = c$.
\end{lemma}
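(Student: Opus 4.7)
The plan is to construct $D$ explicitly via an outer product of the row and column slacks of $X$. Concretely, I would define the row-slack vector $u = r - X\bones \in \Rnonneg^n$ and the column-slack vector $v = c - X^T\bones \in \Rnonneg^n$; both are nonnegative by the hypotheses on $X$. The candidate rounding will be
\begin{equation*}
  D = \frac{1}{s}\, u v^T \qquad \text{with} \qquad s = \bones^T u,
\end{equation*}
with the convention $D = 0$ in the degenerate case $s = 0$. Since $u,v \geq 0$ and $s \geq 0$, this $D$ is entry-wise nonnegative, and clearly it can be written down in $O(n^2)$ time after forming $u$ and $v$ in $O(n^2)$ time.

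The key observation is that the two slacks carry the same total mass. Indeed, $r, c \in \Delta^n$ gives $\bones^T r = \bones^T c = 1$, and therefore
\begin{equation*}
  \bones^T u = 1 - \bones^T X \bones = 1 - \bones^T X^T \bones = \bones^T v.
\end{equation*}
In particular $s = \bones^T u = \bones^T v$, so $s = 0$ forces $u = v = 0$ and the choice $D = 0$ trivially yields $Y = X$ with the required marginals. When $s > 0$, a direct computation gives
\begin{equation*}
  D \bones = \frac{1}{s}\, u (v^T \bones) = \frac{s}{s}\, u = u, \qquad D^T \bones = \frac{1}{s}\, v (u^T \bones) = v,
\end{equation*}
so that $Y = X + D$ satisfies $Y \bones = X\bones + u = r$ and $Y^T \bones = X^T \bones + v = c$, as required.

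There is really no obstacle here beyond noting the shared-mass identity $\bones^T u = \bones^T v$, which is what makes the outer-product correction feasible; the rest is a short verification and an $O(n^2)$ running-time claim, both immediate. I would present the argument in the order above: define the slacks, observe they share total mass, define $D$, handle the degenerate case, and verify the marginals.
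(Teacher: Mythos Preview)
Your proof is correct and is essentially identical to the paper's own argument: define the row/column slacks, note they have equal total mass, form the rank-one correction $D = \frac{1}{s} u v^T$ (with $D=0$ when $s=0$), and verify the marginals. Only the notation differs.
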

\begin{proof}
  Define $e_r := r - X\bones$ and $e_c := c - X^T\bones$ and observe $e_r,e_c
  \geq 0$ coordinate-wise and that
  \begin{equation*}
  \norm{e_r}_1 = \bones^T(r - X\bones) = 1 - \bones^TX\bones = (c^T -
    \bones^TX)\bones = \norm{e_c}_1
  \end{equation*}
  Hence, set $D := \frac{1}{\norm{e_c}_1} e_r e_c^T$ where, by convention, $D = 0$ if $\norm{e_c}_1 = 0$.

  It is easy to verify that if $\norm{e_c}_1 = 0$, then $Y=X+D$ has the
  prescribed marginals (row and column sums). Thus, assume that $\norm{e_c}_1
  \neq 0$. Then,
  \begin{equation*}
  Y\bones = \left( X + \frac{1}{\norm{e_c}_1} e_r e_c^T \right) \bones =
    X\bones + e_r = r
  \end{equation*}
  and, similarly, $Y^T\bones = c$.
\end{proof}
Using this lemma, the main result quickly follows:
\begin{theorem}\label{thm:LPred}
  Suppose there exists an oracle $\mathcal{O}$ which computes an
  $\epsilon'$-approximate solution (see Definition \ref{def:packLP}) to the
  packing LP \eqref{eq:LPForm} in time $O(\mathcal{T}\left(m, l,
    1/\epsilon'\right))$.

   Then, there is an
   algorithm which computes an $\epsilon$-approximate solution to the optimal transport problem
   \eqref{eq:Primal} in time
   \begin{equation*}
  O\left(n^2 + \mathcal{T}\left(n, n, \frac{\norm{C}_{\max}}{\epsilon}\right) \right)
   \end{equation*}
\end{theorem}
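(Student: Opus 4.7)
The plan is to invoke the oracle $\mathcal{O}$ on the packing LP \eqref{eq:LPForm} with a suitably calibrated accuracy $\epsilon' = \epsilon/\norm{C}_{\max}$, and then apply Lemma \ref{lem:LPRounding} to round its approximately-feasible output into a genuine transportation plan. First, I would verify that \eqref{eq:LPForm} satisfies Definition \ref{def:packLP}: the cost $B = \norm{C}_{\max} \bones \bones^T - C$ is entrywise nonnegative because $C \leq \norm{C}_{\max}$ entrywise, and the two marginal constraints $X\bones \leq r$, $X^T\bones \leq c$ can be written in the form $Au \leq b$ (after vectorizing $X$) with a nonnegative coefficient matrix $A$ and nonnegative right-hand side $(r,c)$. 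Feeding this LP to $\mathcal{O}$ then yields, in time $\mathcal{T}(n, n, \norm{C}_{\max}/\epsilon)$, a matrix $\widehat{X} \in \Kc(r,c)$ with $\iprod{B}{\widehat{X}} \geq (1 - \epsilon') V_*$, where $V_* = \max_{X \in \Kc(r,c)} \iprod{B}{X}$.

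Next, I would apply Lemma \ref{lem:LPRounding} to $\widehat{X}$ to produce, in $O(n^2)$ additional time, a matrix $Y = \widehat{X} + D \in \Uc(r,c)$, and then bound $\iprod{C}{Y}$. A short calculation using the explicit form of $D$ gives $\bones^T D \bones = 1 - \bones^T \widehat{X} \bones$, whence $\iprod{C}{D} \leq \norm{C}_{\max}(1 - \bones^T \widehat{X} \bones)$. Adding this to $\iprod{C}{\widehat{X}}$ and using $\iprod{B}{\widehat{X}} = \norm{C}_{\max}\bones^T\widehat{X}\bones - \iprod{C}{\widehat{X}}$ telescopes the cost of $Y$ into $\iprod{C}{Y} \leq \norm{C}_{\max} - \iprod{B}{\widehat{X}}$. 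Since any $X \in \Uc(r,c)$ lies in $\Kc(r,c)$ and satisfies $\bones^T X \bones = 1$, one has $V_* \geq \norm{C}_{\max} - \mathrm{OPT}$, where $\mathrm{OPT}$ denotes the optimum of \eqref{eq:Primal}; chaining with the oracle guarantee gives $\iprod{C}{Y} \leq \norm{C}_{\max} - (1-\epsilon')V_* \leq \mathrm{OPT} + \epsilon' V_* \leq \mathrm{OPT} + \epsilon$, where the last inequality uses the trivial upper bound $V_* \leq \norm{C}_{\max}$ (since $B \leq \norm{C}_{\max}\bones\bones^T$ entrywise and $\bones^T X \bones \leq 1$ on $\Kc(r,c)$).

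The main obstacle is that the packing solver only guarantees multiplicative accuracy while \eqref{eq:epsOptim} requires additive accuracy, and simultaneously the oracle's output is not a feasible coupling and must be repaired. Both difficulties dissolve through the single telescoping identity $\iprod{C}{Y} \leq \norm{C}_{\max} - \iprod{B}{\widehat{X}}$: it shows that the suboptimality of $\widehat{X}$ within the packing LP and the extra cost introduced by rounding it into $\Uc(r,c)$ are not two independent errors but one, bounded by $\epsilon' V_* \leq \epsilon' \norm{C}_{\max}$ and hence controlled by the choice of $\epsilon'$. The total running time is $\mathcal{T}(n, n, \norm{C}_{\max}/\epsilon) + O(n^2)$, as claimed.
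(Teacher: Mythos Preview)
Your proposal is correct and follows essentially the same approach as the paper. The only cosmetic difference is that the paper obtains your telescoping inequality $\iprod{C}{Y} \leq \norm{C}_{\max} - \iprod{B}{\widehat{X}}$ in one line by observing $\iprod{B}{Y} \geq \iprod{B}{\widehat{X}}$ (since $Y \geq \widehat{X}$ entrywise and $B \geq 0$) together with $\bones^T Y \bones = 1$, and then compares directly to $\iprod{B}{X_*}$ rather than to $V_*$; these are equivalent formulations of the same argument.
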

\begin{proof}
  Let $X_{\epsilon'}$ be the $\epsilon'$-approximate solution obtained by running
  $\mathcal{O}$ on \eqref{eq:LPForm} with approximation parameter $\epsilon' =
  \epsilon/\norm{C}_{\max}$. By Lemma \ref{lem:LPRounding}, we can
  compute a $D \in \Rnonneg^{n\times n}$ in $O(n^2)$ time such that $Y = X_{\epsilon'}+ D$ is feasible for
  \eqref{eq:Primal}. Hence, denoting the optimal solution to the original transportation problem
  \eqref{eq:Primal} by $X_*$, we have
  \begin{equation*}
  \iprod{B}{Y} \geq \iprod{B}{X_{\epsilon'}} \geq (1-\epsilon')\iprod{B}{X_*}
  \end{equation*}
  where we have used the definition of $\epsilon'$-optimality for
  $X_{\epsilon'}$ and the fact that $Y \geq X_{\epsilon '}$ entry-wise.
  Expanding this relationship in $B$ and using the fact that $\bones^T Y \bones
  = 1 $ and $ \bones^T X_* \bones = 1$, we obtain
  \begin{equation*}
  \norm{C}_{\max} - \iprod{C}{Y} \geq \norm{C}_{\max} - \iprod{C}{X_*} -\epsilon'\iprod{B}{X_*}
  \end{equation*}
  Upon rearrangement this yields
  \begin{equation*}
  \iprod{C}{Y} \leq \iprod{C}{X_*} +\epsilon'\iprod{B}{X_*}
  \end{equation*}
  As $\norm{B}_{\max} \leq \norm{C}_{\max}$ and $\epsilon' = \epsilon/\norm{C}_{\max}$, H{\"o}lder's
  inequality implies that
  \begin{equation*}
  \iprod{C}{Y} \leq \iprod{C}{X_*} +\epsilon
  \end{equation*}
  Hence, $Y$ is an $\epsilon$-approximate solution of the optimal transportation
  problem \eqref{eq:Primal}. Moreover, it quickly follows that the total time of
  this procedure is $O\left(n^2 + \mathcal{T}\left(n, n, \norm{C}_{\max}/\epsilon\right) \right)$
\end{proof}
Using this reduction, we can now obtain our desired run-time for
\eqref{eq:Primal},
simply by solving \eqref{eq:LPForm} using the current best packing algorithm.
\begin{theorem}[See \cite{OrechPacking}]\label{thm:packLP}
  Given a packing linear program \eqref{eq:PackLP}, there exists an algorithm that computes an $\epsilon$-approximate
solution to \eqref{eq:PackLP} in time $\widetilde{O}\left(m + l + \emph{nnz}(A)/\epsilon
\right)$ with high probability.
\end{theorem}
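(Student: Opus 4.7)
The plan is to invoke Theorem \ref{thm:packLP} as a direct citation to the packing LP solver of \cite{OrechPacking}, since no original derivation is required. To make the proposal substantive, I will describe the high-level structure of the cited algorithm and why it attains the stated complexity.

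The first step is a preprocessing pass that rescales the rows of $A$ so that the packing constraint becomes $Ax \leq \bones$; this takes $O(m + \emph{nnz}(A))$ time and preserves the multiplicative approximation guarantee. One then replaces the hard constraint with a soft-max penalty, passing to the smooth concave program
\[
\max_{x \geq 0} \; d^T x - \frac{1}{\mu}\log\sum_{i=1}^m \exp\bigl(\mu (Ax)_i\bigr),
\]
with smoothness parameter $\mu = \Theta(\epsilon^{-1}\log m)$, whose optimum tracks the true packing optimum within a multiplicative $(1-\epsilon)$ factor. To this smoothed program one applies an accelerated stochastic coordinate-update method: at each step, a column index $j$ is sampled from a gradient-dependent distribution and the corresponding coordinate $x_j$ is updated at cost proportional to $\emph{nnz}(A_{\cdot,j})$. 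A careful amortization over $\widetilde{O}(1/\epsilon)$ outer iterations yields expected total work $\widetilde{O}(\emph{nnz}(A)/\epsilon)$, and the high-probability guarantee then follows from standard martingale concentration combined with geometric restarting.

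The hardest part, and the technical contribution of \cite{OrechPacking} and the line of work on width-independent solvers it extends, is establishing that the resulting complexity is independent of the \emph{width} of the LP: it must have no polynomial dependence on $\norm{A}_{\max}$ or on the ratio between the largest and smallest entries of $A$. This is achieved by performing each update in the entropic geometry induced by the soft-max penalty together with a careful clipping/truncation argument that bounds the influence of any single large entry of $A$ on a given step. Combining the $O(m + l + \emph{nnz}(A))$ preprocessing cost, the $\widetilde{O}(1/\epsilon)$ iteration bound, and the amortized per-step cost yields the claimed $\widetilde{O}(m + l + \emph{nnz}(A)/\epsilon)$ running time.
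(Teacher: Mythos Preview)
Your proposal is correct: the paper does not prove Theorem~\ref{thm:packLP} at all but simply cites it from \cite{OrechPacking} as a black box, so recognizing it as a direct citation is exactly the right move. Your additional high-level sketch of the solver's structure is extra content beyond what the paper provides, but it is accurate in spirit and does no harm.
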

With Theorem \ref{thm:packLP} providing the oracle in Theorem
\ref{thm:LPred}, we immediately obtain the following corollary
\begin{corollary}\label{cor:LPfinalRT}
  There exists an algorithm which computes an $\epsilon$-approximate
  solution to the optimal transport problem
   \eqref{eq:Primal} with high probability in time $\widetilde{O}\left(n^2\norm{C}_{\max}/\epsilon \right)$.
\end{corollary}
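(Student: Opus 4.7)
The plan is to combine Theorem \ref{thm:LPred} with the packing LP solver from Theorem \ref{thm:packLP} in a completely direct way. Since Theorem \ref{thm:LPred} reduces approximating \eqref{eq:Primal} to approximating the packing LP \eqref{eq:LPForm}, all that is required is to instantiate the oracle $\mathcal{O}$ with the algorithm of Theorem \ref{thm:packLP} and substitute the relevant problem parameters.

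First I would identify \eqref{eq:LPForm} explicitly as a packing LP in the sense of Definition \ref{def:packLP}. The variable $X \in \Rnonneg^{n \times n}$ flattens into a vector of $l = n^2$ nonnegative entries, the objective vector $d$ corresponds to $B = \|C\|_{\max}\bones\bones^T - C \ge 0$, and the constraint matrix encodes the $m = 2n$ row and column sum inequalities $X\bones \le r$ and $X^T\bones \le c$ with right-hand side $b = (r; c) \ge 0$. Each coordinate $X_{ij}$ appears in exactly one row-sum constraint and one column-sum constraint, so $\textnormal{nnz}(A) = 2n^2$, and all quantities $A$, $b$, $d$ are entry-wise nonnegative as required.

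Next I would invoke Theorem \ref{thm:LPred} with $\epsilon' = \epsilon/\|C\|_{\max}$ and with $\mathcal{T}(m,l,1/\epsilon') = \widetilde{O}(m + l + \textnormal{nnz}(A)/\epsilon')$ supplied by Theorem \ref{thm:packLP}. Plugging $m = 2n$, $l = n^2$, and $\textnormal{nnz}(A) = 2n^2$ into this expression yields a running time of
\begin{equation*}
\widetilde{O}\!\left(n^2 + \frac{n^2 \|C\|_{\max}}{\epsilon}\right) = \widetilde{O}\!\left(\frac{n^2 \|C\|_{\max}}{\epsilon}\right),
\end{equation*}
matching the claimed bound. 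The high-probability guarantee is inherited directly from Theorem \ref{thm:packLP}, and the additional $O(n^2)$ rounding step from Lemma \ref{lem:LPRounding} that is embedded in Theorem \ref{thm:LPred} is absorbed into the bound.

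There is essentially no obstacle to overcome: the nontrivial content has already been handled in Lemma \ref{lem:LPRounding} and Theorem \ref{thm:LPred} (the reduction, its correctness, and the accuracy-scaling argument via Hölder's inequality) and in the packing-LP solver of Theorem \ref{thm:packLP}. The only thing to check carefully is the accounting of $m$, $l$, and $\textnormal{nnz}(A)$ for the specific packing LP \eqref{eq:LPForm}, together with verifying that substituting $\epsilon' = \epsilon/\|C\|_{\max}$ produces the stated final complexity. Thus the corollary follows as an immediate composition of the two theorems.
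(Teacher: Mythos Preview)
Your proposal is correct and follows exactly the approach the paper intends: the paper states that the corollary is obtained ``immediately'' by using Theorem~\ref{thm:packLP} as the oracle in Theorem~\ref{thm:LPred}, and you have simply spelled out this substitution, including the explicit accounting $m=2n$, $l=n^2$, $\mathrm{nnz}(A)=2n^2$, that the paper leaves implicit. There is nothing to correct.
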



\section{Solving by Matrix Scaling and Box-constrained Newton}
\label{sec:box-constrained}
In this section, we give a different procedure for computing an $\epsilon$-optimal
solution to the optimal transport problem in time $\widetilde{O}\left(n^2 \norm{C}_\infty/\epsilon \right)$.
The advantage of this new approach is that it not
only obtains the currently-best known sequential run-time, but it also achieves the
fastest known parallel complexity for solving \eqref{eq:Primal} while preserving
total work.

As a first step, we will note the following reduction to the matrix scaling
problem which appears in
prior work \cite{CuturiSinkhorn, WeedSinkhorn, BetterSinkAnaly}. The optimal transport problem naturally yields an entropy-regularized version
\begin{equation}\label{eq:RegPrimal}
  \underset{X \in \,\, \Uc(r,c)}{\min} \iprod{C}{X} - \eta H(X)
\end{equation}
whose optimal value of \eqref{eq:RegPrimal} is called the Sinkhorn cost \cite{CuturiSinkhorn}. The
namesake refers to the fact that the dual of \eqref{eq:RegPrimal} is equivalent to the problem
\begin{equation}\label{eq:MatScal}
  \min_{x,y \in \R^n} \psi(x, y)
\defeq \bones^T B_{C/\eta}(x,y) \bones - r^Tx - c^Ty
\text{ where }
\left(  B_{C/\eta}(x,y)\right)_{ij} \defeq e^{x_i + y_j - C_{ij}/\eta}
\end{equation}
More generally, we will write
\begin{equation}\label{eq:MatScalA}
  \min_{x,y \in \R^n} \psi_{A,r,c}(x, y) \defeq \bones^T M_{A}(x,y) \bones - r^Tx - c^Ty
\text{ where }
\left(M_{A}(x,y) \right)_{ij} \defeq A_{ij}e^{x_i + y_j}
\end{equation}
for any non-negative matrix $A \in \R^{n\times n}$ and positive vectors $r,c \in
R_*^n$. An optimal solution of \eqref{eq:MatScalA} gives diagonal
matrices which $(r,c)$-scale $A$.

It is known that solving \eqref{eq:MatScal} is sufficient to solve the
optimal transport problem in the following sense.
\begin{lemma}[See proof of Theorem 1 in \cite{WeedSinkhorn}]\label{lem:2}
Let $\widehat{x}, \widehat{y}$ be solutions which satisfy
\begin{equation*}
  \norm{B_{C/\eta}(\widehat{x}, \widehat{y}) \bones - r}_1 + \norm{B_{C/\eta}(\widehat{x},
      \widehat{y})^T \bones - c}_1 \leq \epsilon
\end{equation*}
  i.e. $\norm{\nabla \psi(\widehat{x},\widehat{y})}_1 \leq \epsilon$. Then,
  there exists a projection $\widehat{X}$ of $B_{C/\eta}(\widehat{x},
  \widehat{y})$ onto $\Uc(r,c)$ that can be computed in linear-time and work
  (i.e. $O(n^2)$) and $\widetilde{O}(1)$ depth such that
  \begin{equation*}
  \iprod{C}{\widehat{X}} \leq \min_{X\in \Uc(r,c)}\iprod{C}{X} + 2 \eta\log n
    +4 \epsilon \norm{C}_{\infty}
  \end{equation*}
\end{lemma}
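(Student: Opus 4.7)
Write $B := B_{C/\eta}(\widehat{x}, \widehat{y})$ and let $X^* := \arg\min_{X \in \Uc(r,c)} \iprod{C}{X}$. My plan is to split the target bound into a rounding error $\iprod{C}{\widehat{X}} - \iprod{C}{B}$, controlled by an extension of Lemma~\ref{lem:LPRounding}, and a comparison term $\iprod{C}{B} - \iprod{C}{X^*}$, controlled by observing that $B_{ij} = e^{\widehat{x}_i + \widehat{y}_j - C_{ij}/\eta}$ is exactly the minimizer of the entropy-regularized primal taken over its own (approximate) marginals $(B\bones, B^T \bones)$.

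\textbf{Rounding step.} Given $B$, first scale each row $i$ by $\min\{1, r_i/(B\bones)_i\}$, then scale each column $j$ of the result by $\min\{1, c_j/(\cdot)_j\}$, and finally apply Lemma~\ref{lem:LPRounding} to the entrywise-dominated matrix to obtain $\widehat{X} \in \Uc(r,c)$. A telescoping argument bounds
\[
\|\widehat{X} - B\|_1 \leq 2\bigl(\|B\bones - r\|_1 + \|B^T\bones - c\|_1\bigr) = 2\|\nabla \psi(\widehat{x}, \widehat{y})\|_1 \leq 2\epsilon,
\]
so by H\"{o}lder $\iprod{C}{\widehat{X}} \leq \iprod{C}{B} + 2\epsilon \norm{C}_{\infty}$. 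Each of the three phases is trivially $O(n^2)$ work and $\widetilde{O}(1)$ depth.

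\textbf{Comparison step.} Because $B$ has the Gibbs form above, it satisfies the KKT conditions of the strictly convex program $\min_{X \geq 0}\{\iprod{C}{X} - \eta H(X) : X\bones = B\bones, \ X^T\bones = B^T\bones\}$, so it is the unique minimizer. Apply the same rounding procedure to $X^*$, but targeted at marginals $(B\bones, B^T\bones)$; since those differ from $(r,c)$ by at most $\epsilon$ in $\ell_1$, the resulting $\widetilde{X}$ has the prescribed marginals and $\|\widetilde{X} - X^*\|_1 \leq 2\epsilon$. Optimality of $B$ together with H\"{o}lder then give
\[
\iprod{C}{B} - \eta H(B) \leq \iprod{C}{\widetilde{X}} - \eta H(\widetilde{X}) \leq \iprod{C}{X^*} + 2\epsilon \norm{C}_{\infty} - \eta H(\widetilde{X}).
\]
Concavity of $-t\log t$ combined with $\bones^T B \bones \leq 1 + \epsilon$ yields $H(B) \leq 2\log n$ (up to an additive $O(\epsilon)$), while every entry of $\widetilde{X}$ is bounded by its marginal and hence by $1 + \epsilon$, so each summand $\widetilde{X}_{ij}(1 - \log \widetilde{X}_{ij})$ is nonnegative and $H(\widetilde{X}) \geq 0$. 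Chaining the two parts gives the advertised
\[
\iprod{C}{\widehat{X}} \leq \iprod{C}{X^*} + 4\epsilon \norm{C}_{\infty} + 2\eta \log n.
\]

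\textbf{Main obstacle.} The only delicate point is that $B$ and $\widetilde{X}$ are only \emph{approximate} probability distributions (total mass $1 \pm O(\epsilon)$); one must verify that this small mass mismatch contributes only lower-order terms to the entropy gap, absorbable into the $4\epsilon\norm{C}_{\infty}$ slack. The remaining pieces, including the $\ell_1$ guarantee of the rounding and the parallel depth claim, are standard.
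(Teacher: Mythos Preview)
The paper does not supply its own proof of this lemma; it simply cites \cite{WeedSinkhorn} (Altschuler--Weed--Rigollet). Your plan is precisely the argument given there: the row/column rescaling followed by the rank-one correction of Lemma~\ref{lem:LPRounding} is their rounding procedure, your $\ell_1$ bound $\|\widehat{X}-B\|_1\le 2(\|B\bones-r\|_1+\|B^T\bones-c\|_1)$ is their rounding lemma, and the comparison step via ``$B$ is the entropic minimizer over its own marginals, so compare against $X^*$ rounded to those marginals'' is exactly how they obtain the $2\eta\log n$ and the second $2\epsilon\|C\|_\infty$ terms. So at the level of structure there is nothing to contrast.

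One small point worth tightening in your write-up: bounding $H(B)\le 2\log n$ and $H(\widetilde{X})\ge 0$ separately (with the paper's convention $H(A)=-\sum A_{ij}(\log A_{ij}-1)$) loses an additive $\eta$ and an $O(\epsilon\eta\log n)$ term, since for a matrix of total mass $s$ one has $H\in[s-s\log s,\,2s\log n+s-s\log s]$. The clean way to get exactly $2\eta\log n$ is to bound the \emph{difference} directly: both $B$ and $\widetilde{X}$ have the same total mass $s=\bones^T B\bones$, so $H(B)-H(\widetilde{X})\le 2s\log n\le 2(1+\epsilon)\log n$, and the $O(\epsilon\eta\log n)$ remainder is indeed lower-order in the regime $\eta=O(\|C\|_\infty/\log n)$ where the lemma is applied. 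You flagged this as the main obstacle; it is handled in \cite{WeedSinkhorn} by this same observation, and is not a real gap.
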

Moreover, using Lemma \ref{lem:2} and the following fact, the main reduction of this section is almost immediate.
\begin{lemma}\label{lem:1}
 Given an instance of \eqref{eq:Primal}, there exist a pair of modified, input distributions $\widetilde{r},\widetilde{c}$ such that $\widetilde{r}_i,
 \widetilde{c}_i \geq \frac{\epsilon}{2\norm{C}_{\infty} n}$ for all $i \in [n]$
 and the solution
\begin{equation} \label{eq:PrimalMod}
  \widetilde{X}_* = \underset{X \in \Uc(\widetilde{r},\widetilde{c})}{\min} \iprod{C}{X}
\end{equation}
can be extended to an $\epsilon$-approximate solution $\widehat{X}$ of \eqref{eq:Primal} in
$O\left(n^2\right)$ time/work and $\widetilde{O}(1)$ depth.
\end{lemma}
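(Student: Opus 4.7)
The plan is to blend $(r,c)$ with the uniform distribution on $[n]$ to produce strictly-positive marginals, invoke the optimality of $\widetilde{X}_*$ on the blended problem, and then round the resulting transport plan so that its marginals become exactly $(r,c)$. This is the standard marginal-smoothing reduction from the Sinkhorn literature (see \cite{WeedSinkhorn,BetterSinkAnaly}), but stated here in the form convenient for downstream use in Section \ref{sec:box-constrained}.

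First I would fix a mixing parameter $\alpha$ of order $\epsilon/\norm{C}_\infty$, with the absolute constant chosen small enough that the final approximation error is at most $\epsilon$ yet large enough to guarantee $\alpha/n \geq \epsilon/(2n\norm{C}_\infty)$, and define $\widetilde{r} = (1-\alpha)r + (\alpha/n)\bones$ and $\widetilde{c} = (1-\alpha)c + (\alpha/n)\bones$. These are probability vectors with every coordinate at least $\alpha/n$, which gives the required entry-wise lower bound immediately. Next I would bound $\iprod{C}{\widetilde{X}_*}$ against the original optimum $\iprod{C}{X_*}$: for any $X_* \in \Uc(r,c)$ attaining the optimum of \eqref{eq:Primal} and for $X^u = \bones\bones^T/n^2 \in \Uc(\bones/n,\bones/n)$, the convex combination $(1-\alpha)X_* + \alpha X^u$ is feasible for $\Uc(\widetilde{r},\widetilde{c})$, so optimality of $\widetilde{X}_*$ together with $\iprod{C}{X^u}\leq\norm{C}_\infty$ yields $\iprod{C}{\widetilde{X}_*} \leq \iprod{C}{X_*} + \alpha\norm{C}_\infty$.

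Next I would produce $\widehat{X}$ from $\widetilde{X}_*$ by a natural generalization of Lemma \ref{lem:LPRounding}: for each row $i$ with $\widetilde{r}_i > r_i$, rescale row $i$ by $r_i/\widetilde{r}_i$ to trim off excess mass; do the same for the columns; finally add a rank-one correction of the form $e_r e_c^T/\norm{e_c}_1$ with $e_r = r - Y\bones \geq 0$ and $e_c = c - Y^T\bones \geq 0$, where $Y$ denotes the trimmed matrix, to restore the marginals exactly. This procedure uses $O(n^2)$ work and $\widetilde{O}(1)$ depth with standard parallel prefix-sum primitives. The only mass added to $\widetilde{X}_*$ sits in the rank-one correction, whose total $\ell_1$ weight is at most $\norm{\widetilde{r}-r}_1 + \norm{\widetilde{c}-c}_1 = O(\alpha)$; the cost contribution of this added mass is therefore at most $O(\alpha\norm{C}_\infty)$, giving $\iprod{C}{\widehat{X}} \leq \iprod{C}{\widetilde{X}_*} + O(\alpha\norm{C}_\infty)$.

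Chaining the two inequalities produces $\iprod{C}{\widehat{X}} \leq \iprod{C}{X_*} + O(\alpha\norm{C}_\infty)$, which becomes $\iprod{C}{X_*} + \epsilon$ once the hidden constant in $\alpha$ is fixed. The main obstacle is bookkeeping: one has to pick a single $\alpha$ that simultaneously respects the lower bound $\widetilde{r}_i,\widetilde{c}_i \geq \epsilon/(2n\norm{C}_\infty)$ and forces both sources of error (the $\alpha\norm{C}_\infty$ perturbation of the optimum and the $O(\alpha\norm{C}_\infty)$ cost of the rounding correction) to sum to at most $\epsilon$. This is handled cleanly by decoupling the $\epsilon$ appearing in the statement from the constant used to define $\alpha$, and then verifying that the trimming/correction step preserves nonnegativity entry-wise throughout.
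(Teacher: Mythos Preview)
Your argument is correct in spirit but takes a genuinely different route from the paper. You smooth the marginals by convex combination with the uniform distribution; the paper instead \emph{conditions} on the coordinates that already exceed the threshold. Concretely, the paper sets $S_r=\{i:r_i\ge \epsilon/(2n\norm{C}_\infty)\}$ and $S_c$ analogously, takes $\widetilde r=r_{|S_r}$, $\widetilde c=c_{|S_c}$, and rebuilds $\widehat X$ by placing $\mu\widetilde X_*$ on the block $S_r\times S_c$ (with $\mu=r(S_r)c(S_c)$) and the product $r_ic_j$ on the complement. Since the excluded coordinates carry total mass below $\epsilon/(2\norm{C}_\infty)$ on each side, the off-block contribution to $\iprod{C}{\widehat X}$ is at most $\epsilon$, and the on-block contribution is at most $\iprod{C}{X_*}$ via the conditioning inequality $\iprod{C}{\widetilde X_*}\le \mu^{-1}\iprod{C}{X_*}$.

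The trade-off is this. Your mixing construction is the standard Sinkhorn-literature device and keeps the instance on $[n]\times[n]$, but it cannot hit the \emph{exact} constant $\epsilon/(2n\norm{C}_\infty)$ while simultaneously delivering additive error $\epsilon$: the lower bound forces $\alpha\ge \epsilon/(2\norm{C}_\infty)$, yet the perturbation-plus-rounding error is a constant multiple of $\alpha\norm{C}_\infty$ with multiple strictly larger than $2$ (the $\alpha\norm{C}_\infty$ loss at the optimum plus an $O(\alpha\norm{C}_\infty)$ loss from the rank-one correction, whose mass is governed by $\norm{\widetilde r-r}_1+\norm{\widetilde c-c}_1\le 4\alpha$). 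The ``decoupling'' you invoke really amounts to proving the lemma with a smaller constant in the entry-wise lower bound, which is entirely sufficient for the downstream application in Theorem~\ref{thm:matScalRed} (only the order $\epsilon/(n\norm{C}_\infty)$ matters there) but does not match the statement as written. The paper's thresholding avoids this tension because $\widetilde r$ is by construction supported on indices already above the threshold, at the price of implicitly shrinking the support of the modified instance.
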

\begin{proof}
  Let
  \begin{equation*}
  S_r = \left\{ i \in [n] : r_i \geq \frac{\epsilon}{2\norm{C}_{\infty} n}
    \right\} \hspace{1.5em}\text{and}\hspace{1.5em} S_c = \left\{ i \in [n] : c_i \geq \frac{\epsilon}{2\norm{C}_{\infty} n}
    \right\}
  \end{equation*}
  and set $\widetilde{r}$ and $\widetilde{c}$ to be the corresponding marginal
  distributions of $r_{|S_r} \otimes c_{|S_c} \in \Delta^{n\times n}$. Let
  $\widetilde{X}_*$ be the solution of \eqref{eq:PrimalMod} for such
  marginals $\widetilde{r},\widetilde{c}$, denote
  \begin{equation*}
  \mu = \sum_{i\in S_r, j \in S_c} r_ic_j \leq 1
  \end{equation*}
  and set $E = S_r \times S_C \in [n] \times [n]$.
  For the optimal solution $X_*$ of \eqref{eq:Primal} with marginals $r,c$
  and let $X_*^{E}$ be the
  distribution induced by conditioning $X_*$ on the set $E$.

  The optimality of $\widetilde{X}_*$ implies that
  \begin{equation*}
  \iprod{C}{\widetilde{X}_*} \leq \iprod{C}{X_*^{E}} \leq
    \frac{1}{\mu}\iprod{C}{X_*}
  \end{equation*}
  Further, if we let $\widehat{X}$ be the coupling such that
  \begin{equation*}
  \widehat{X}_{ij} = \begin{cases} \mu \widetilde{X}_{ij} & \text{if } i \in S_r,
      j\in S_c \\ r_ic_j & \text{otherwise }\end{cases}
  \end{equation*}
  it is easy to see that $\hat{X}$ has marginals $r$ and $c$ and, by
  construction of $S_r$ and $S_c$, satisfies
  \begin{equation*}
  \iprod{C}{\widehat{X}} \leq \mu\iprod{C}{\widetilde{X}_*} + \epsilon
    \leq \iprod{C}{X_*} + \epsilon
  \end{equation*}
  Clearly, $\widetilde{r}, \widetilde{c}$ and $
  \hat{X}$ can be constructed in $O(n^2)$ time/work and $\widetilde{O}(1)$ depth.
\end{proof}
 \begin{theorem}\label{thm:matScalRed}
   Suppose there exists an oracle $\mathcal{O}$ which computes an $\epsilon'$-approximate
   solution (see Definition \ref{def:MatScal}) to the matrix
   scaling problem in time $O\left(\mathcal{T}\left(n, 1 /\epsilon',
       \nu, \xi \right)\right)$ where $\nu = \max_{i,j} 1/A_{ij}$,
   $\xi= \max_{i \in [n]} \left( 1/\min(r_i,c_i)\right)$, and we let
   $\mathcal{T}\left(n, 1/\epsilon', \nu, \xi \right) = \infty$ when
   $\nu = \infty$ or $\xi = \infty$.
   Then, there is an
   algorithm which computes an $\epsilon$-approximate solution to the optimal transport problem
   \eqref{eq:Primal} in time
   \begin{equation*}
  O\left(n^2 + \mathcal{T}\left(n, \frac{16\norm{C}_\infty}{\epsilon},
      n^{8\norm{C}_\infty/\epsilon}, \frac{4 \norm{C}_\infty n}{\epsilon}\right)
  \right)
   \end{equation*}
 \end{theorem}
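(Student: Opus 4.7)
The plan is to combine the two helper lemmas just proved. First I would apply Lemma \ref{lem:1} to replace the given marginals by $\widetilde{r},\widetilde{c}$ with a uniform positive lower bound, then use the matrix scaling oracle on an exponentiated cost matrix to approximately minimize the entropic OT dual $\psi$ from \eqref{eq:MatScal}, and finally round via Lemma \ref{lem:2} and extend back via Lemma \ref{lem:1}. The only real work is calibrating the regularization strength $\eta$, the oracle accuracy $\epsilon'$, and the error budget fed into Lemma \ref{lem:1} so that the three error contributions add to at most $\epsilon$ and the oracle parameters $(\nu,\xi)$ come out to the exact triple stated in the theorem.

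Concretely, I would invoke Lemma \ref{lem:1} with error budget $\epsilon/2$, producing $\widetilde{r},\widetilde{c}$ whose minimum entry is at least $\epsilon/(4n\norm{C}_{\infty})$; this pins down $\xi = 4n\norm{C}_{\infty}/\epsilon$. I would then set $\eta = \epsilon/(8\log n)$ and $A = e^{-C/\eta}$ entrywise, so that $\nu = \max_{ij} A_{ij}^{-1} = e^{\norm{C}_{\infty}/\eta} = n^{8\norm{C}_{\infty}/\epsilon}$. Under the reparametrization $X = \mathrm{diag}(e^{\widehat{x}})$, $Y = \mathrm{diag}(e^{\widehat{y}})$ one has $XAY = B_{C/\eta}(\widehat{x},\widehat{y})$, so an $\epsilon'$-approximate $(\widetilde{r},\widetilde{c})$-scaling of $A$ is exactly a pair $(\widehat{x},\widehat{y})$ with $\norm{\nabla \psi_{A,\widetilde{r},\widetilde{c}}(\widehat{x},\widehat{y})}_1 \le \epsilon'$. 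Calling the oracle with $\epsilon' = \epsilon/(16\norm{C}_{\infty})$ therefore produces such a pair in time $\mathcal{T}(n,\, 16\norm{C}_{\infty}/\epsilon,\, n^{8\norm{C}_{\infty}/\epsilon},\, 4n\norm{C}_{\infty}/\epsilon)$, matching the advertised triple.

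Applying Lemma \ref{lem:2} (with target marginals $\widetilde{r},\widetilde{c}$) to the resulting $B_{C/\eta}(\widehat{x},\widehat{y})$ yields in $O(n^2)$ time a coupling $\widehat{X}\in \Uc(\widetilde{r},\widetilde{c})$ whose cost exceeds the $(\widetilde{r},\widetilde{c})$-optimum by at most $2\eta\log n + 4\epsilon'\norm{C}_{\infty} = \epsilon/4 + \epsilon/4 = \epsilon/2$. Extending $\widehat{X}$ through the post-processing routine of Lemma \ref{lem:1} costs another $O(n^2)$ and introduces at most $\epsilon/2$ of additional error, yielding an $\epsilon$-approximate solution to \eqref{eq:Primal} within the claimed running time. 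The main obstacle is purely bookkeeping: synchronizing the three error sources against the three oracle parameters so that the stated triple $(16\norm{C}_{\infty}/\epsilon,\, n^{8\norm{C}_{\infty}/\epsilon},\, 4n\norm{C}_{\infty}/\epsilon)$ emerges exactly, with no slack that would weaken the complexity bound; once $\eta$ and $\epsilon'$ are fixed every other constant is forced, and the substantive content already sits in Lemmas \ref{lem:1} and \ref{lem:2}.
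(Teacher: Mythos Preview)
Your proposal is correct and follows essentially the same approach as the paper's proof: invoke Lemma~\ref{lem:1} to lower-bound the marginals, run the matrix-scaling oracle on $A=e^{-C/\eta}$, and round via Lemma~\ref{lem:2}. If anything, your bookkeeping is cleaner---you explicitly split the $\epsilon$ budget between the extension step of Lemma~\ref{lem:1} and the rounding step of Lemma~\ref{lem:2}, which is precisely what forces the constants $(16\norm{C}_\infty/\epsilon,\; n^{8\norm{C}_\infty/\epsilon},\; 4n\norm{C}_\infty/\epsilon)$ in the statement, whereas the paper's proof leaves this split implicit and carries some mismatched constants.
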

\begin{proof} 
  By Lemma \ref{lem:1}, we can assume without loss of generality that $\xi \leq
  \left(4 \norm{C}_\infty n\right) / \epsilon$.
  Set $\eta = \epsilon / (4\log n)$. From \cite{LinialMatrixScaling} and the fact that $e^{- C_{i,j}/\eta},
  r_i,c_i > 0$ we know that $e^{-C/\eta}$ is $(r,c)$-scalable. Thus, by running
  $\mathcal{O}$ on the matrix $e^{-C/\eta}$ with $\epsilon' =
  8\norm{C}_{\infty}/\epsilon$, we can produce an approximate $(r,c)$-scaling
  $B = Xe^{-C/\eta}Y$ such that
  \begin{equation*}
  \norm{B\bones - r}_1 + \norm{B^T \bones - c}_1 \leq \epsilon'
  \end{equation*}
  By Lemma \ref{lem:2}, this scaling can be rounded in $O(n^2)$ time to produce a $\hat{X}$ with
  \begin{equation*}
  \iprod{C}{\widehat{X}} \leq \min_{X\in \Uc(r,c)}\iprod{C}{X} + 2\eta\log
      n+4 \epsilon' \norm{C}_{\infty} \leq \min_{X\in
      \Uc(r,c)}\iprod{C}{X} + \epsilon
  \end{equation*}
  Since
  \begin{equation*}
  \nu = \max_{i,j} \,\, \frac{1}{e^{-C_{ij} / \eta}} \leq \exp\left(\frac{4
        \norm{C}_\infty\log n }{\epsilon}\right) = n^{4\norm{C}_\infty/\epsilon}
  \end{equation*}
  It follows that this procedure takes
  \begin{equation*}
  O\left(n^2 + \mathcal{T}\left(n, \frac{8\norm{C}_\infty}{\epsilon},
        n^{4\norm{C}_\infty/\epsilon}\right) \right)
  \end{equation*}
  total time.
\end{proof}
 \begin{corollary}\label{cor:parMatScalRed}
   Suppose there exists an oracle $\mathcal{O}$ which computes an $\epsilon'$-approximate
   solution to the matrix
   scaling problem in parallel in $O\left(\mathcal{T}_w\left(n, 1 /\epsilon',
       \nu, \xi \right)\right)$ total work and $\widetilde{O}\left(\mathcal{T}_d\left(n, 1 /\epsilon',
       \nu, \xi \right)\right)$ depth.
   Then, there is an
   algorithm which computes an $\epsilon$-approximate solution to the optimal transport problem
   \eqref{eq:Primal} in
   \begin{equation*}
  O\left(n^2 + \mathcal{T}_w\left(n, \frac{16\norm{C}_\infty}{\epsilon},
      n^{8\norm{C}_\infty/\epsilon}, \frac{4 \norm{C}_\infty n}{\epsilon}\right)
  \right)
   \end{equation*}
  work and
   \begin{equation*}
  \widetilde{O}\left(\mathcal{T}_d\left(n, \frac{16\norm{C}_\infty}{\epsilon},
      n^{8\norm{C}_\infty/\epsilon}, \frac{4 \norm{C}_\infty n}{\epsilon}\right)
  \right)
   \end{equation*}
depth.
 \end{corollary}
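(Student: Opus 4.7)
The plan is to replay the sequential reduction of Theorem \ref{thm:matScalRed} while tracking work and depth separately at every stage, exploiting the fact that both the preprocessing of Lemma \ref{lem:1} and the rounding of Lemma \ref{lem:2} are explicitly certified to admit $O(n^2)$ work and $\widetilde{O}(1)$ depth. Consequently, the only stage whose parallel cost is non-trivial will be the call to the matrix-scaling oracle $\mathcal{O}$, and the advertised bounds will drop out by summing work and maximizing depth across the three stages.

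Concretely, I would first invoke Lemma \ref{lem:1} to replace $(r,c)$ by modified marginals $(\widetilde{r},\widetilde{c})$ with $\widetilde{r}_i,\widetilde{c}_i \geq \epsilon/(2\norm{C}_\infty n)$, which forces $\xi \leq 4\norm{C}_\infty n/\epsilon$; by the lemma this costs $O(n^2)$ work and $\widetilde{O}(1)$ depth. Next, set $\eta = \epsilon/(4\log n)$ and form $e^{-C/\eta}$ by an embarrassingly-parallel elementwise map in $O(n^2)$ work and $O(1)$ depth, noting that $\nu = \max_{i,j} e^{C_{ij}/\eta} \leq n^{4\norm{C}_\infty/\epsilon}$ exactly as in the proof of Theorem \ref{thm:matScalRed}. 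Now call $\mathcal{O}$ on $e^{-C/\eta}$ with marginals $(\widetilde{r},\widetilde{c})$ and accuracy $\epsilon' = 8\norm{C}_\infty/\epsilon$ to obtain diagonal scalings $(X,Y)$; by hypothesis this stage requires at most $\mathcal{T}_w(n,16\norm{C}_\infty/\epsilon, n^{8\norm{C}_\infty/\epsilon}, 4\norm{C}_\infty n/\epsilon)$ work and $\mathcal{T}_d$ of the same arguments in depth. Finally, apply the rounding of Lemma \ref{lem:2} to $B = X e^{-C/\eta} Y$ to produce a feasible coupling $\widehat{X}$ with
\[
\iprod{C}{\widehat{X}} \leq \min_{X \in \Uc(r,c)}\iprod{C}{X} + 2\eta\log n + 4\epsilon'\norm{C}_\infty \leq \min_{X \in \Uc(r,c)}\iprod{C}{X} + \epsilon,
\]
where the second inequality uses our choices of $\eta$ and $\epsilon'$. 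Lemma \ref{lem:2} certifies that this rounding step also runs in $O(n^2)$ work and $\widetilde{O}(1)$ depth.

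Summing the three work contributions absorbs the $O(n^2)$ terms into the oracle cost and yields the stated work bound, while taking the maximum of the three depths absorbs the $\widetilde{O}(1)$ terms into the $\widetilde{O}(\mathcal{T}_d(\cdots))$ depth bound. I do not anticipate any genuine obstacle: the correctness analysis is literally that of Theorem \ref{thm:matScalRed}, and the only new ingredient is the parallel-depth claim inside Lemma \ref{lem:2}, which is already part of that lemma's statement. The single piece of bookkeeping worth flagging is simply to confirm that the parameters handed to $\mathcal{O}$ lie within its allowed domain, i.e.\ that both $\nu$ and $\xi$ are finite under the preprocessing of Lemma \ref{lem:1}, which holds by construction since $\widetilde{r},\widetilde{c}>0$ and every entry of $e^{-C/\eta}$ is strictly positive.
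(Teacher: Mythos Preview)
Your proposal is correct and is exactly the argument the paper intends: the corollary is stated without its own proof, as an immediate parallel analogue of Theorem~\ref{thm:matScalRed}, and you have simply replayed that theorem's three stages (Lemma~\ref{lem:1} preprocessing, the oracle call on $e^{-C/\eta}$, Lemma~\ref{lem:2} rounding) while separately tracking work and depth using the $O(n^2)$ work / $\widetilde{O}(1)$ depth guarantees already built into those lemmas. The only cosmetic point is that you (like the paper's proof of Theorem~\ref{thm:matScalRed}) write the oracle accuracy as $\epsilon' = 8\norm{C}_\infty/\epsilon$ when the intended value is $\epsilon' = \epsilon/(8\norm{C}_\infty)$, so that $1/\epsilon' = 8\norm{C}_\infty/\epsilon \le 16\norm{C}_\infty/\epsilon$ matches the statement.
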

Given this reduction between matrix scaling and optimal transport, it remains
for us to provide concrete bounds for $\mathcal{T}\left(n, 1/\epsilon',
  \nu, \xi\right)$ in order to show our desired run-time. To this
end, consider the following guarantee given by a currently best algorithm for the matrix scaling problem\footnote{It should be remarked that
  similar results to \cite{CohenBoxCons} were obtained independently by \cite{OtherScaling}. We
focus on the guarantee stated in \cite{CohenBoxCons} since it is more amenable
for our use.}
\begin{theorem}[See Theorem 9 in \cite{CohenBoxCons}]\label{thm:bcNewton}
   Suppose that there exists a point $z_\epsilon^* = (x_\epsilon^*,
   y_\epsilon^*)$ for which $\psi_{A,r,c}(x_\epsilon^*, y_\epsilon^*) - \psi^* \leq \epsilon^2/(3n)$
   and $\norm{z_\epsilon^*}_\infty \leq B$, where $\psi^* = \min_{x,y \in
     \R^n}\psi_{A,r,c}(x,y)$. Then, there exists a Newton-type algorithm which,
   with high probability, computes an $\widehat{x}, \widehat{y}$ such that
   \begin{equation*}
  \norm{M_{A}(\widehat{x}, \widehat{y}) \bones - r}^2_2 +
  \norm{M_{A}(\widehat{x}, \widehat{y})^T \bones - c}^2_2 \leq \epsilon
   \end{equation*}
  in $\widetilde{O}\left( n^2 B \log^2\left(s_A\right)\right)$ time-- where $s_A$ is the sum of the entries in $A$.
\end{theorem}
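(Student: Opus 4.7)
I would prove this by instantiating the box-constrained Newton framework of \cite{CohenBoxCons} on the convex function $\psi = \psi_{A,r,c}$, exploiting the Laplacian-like structure of its Hessian. Writing $M := M_A(x,y)$, a direct computation gives
\[
\nabla^2 \psi(x,y) = \begin{pmatrix} \text{diag}(M\bones) & M \\ M^T & \text{diag}(M^T\bones) \end{pmatrix},
\]
and one checks that $z^T \nabla^2\psi(x,y)\, z = \sum_{i,j} M_{ij}(z^x_i + z^y_j)^2$. After the trivial change of variables $y \mapsto -y$, this is the Laplacian of the bipartite graph on $[n]\sqcup[n]$ with edge weights $M_{ij}$, a graph with at most $n^2$ edges. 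Hence linear systems in $\nabla^2\psi$ can be solved to any polynomially small relative accuracy in $\widetilde{O}(n^2)$ time via nearly-linear-time Laplacian / SDD solvers.

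The second ingredient is the \emph{box robustness} of the Hessian: if $\norm{(x',y')-(x,y)}_\infty \leq R$, then every entry of $M_A(x',y')$ lies within a factor $e^{2R}$ of the corresponding entry of $M_A(x,y)$, so $e^{-2R}\nabla^2\psi(x,y) \preceq \nabla^2\psi(x',y') \preceq e^{2R}\nabla^2\psi(x,y)$ in the Loewner order. Fixing a constant step radius $R = \Theta(1)$, each box-restricted quadratic model of $\psi$ is therefore spectrally faithful up to a constant factor throughout the box.

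With these two facts in hand, the algorithm I would analyze is the box-constrained Newton iteration: at each step, approximately minimize the quadratic model $\psi(z) + \nabla\psi(z)^T s + \tfrac{1}{2} s^T \nabla^2\psi(z) s$ subject to $\norm{s}_\infty \leq R$ using the Laplacian solver, with a truncation/scaling step to enforce the $\ell_\infty$ constraint. The convergence analysis of \cite{CohenBoxCons} then shows that, because a point of function gap at most $\epsilon^2/(3n)$ lies within $\ell_\infty$-distance $O(B)$ of the initial iterate, $\widetilde{O}(B)$ Newton steps suffice to drive $\psi-\psi^*$ below $\epsilon^2/(3n)$. I would then convert this function-gap bound into the required gradient bound via the smoothness inequality $\norm{\nabla\psi}_2^2 \leq 2\lambda_{\max}(\nabla^2\psi)\,(\psi-\psi^*)$, together with the trace bound $\text{tr}(\nabla^2\psi) = O(s_A e^{2B})$; this is precisely the step that consumes the factor of $n$ in the assumption $\psi(z^*_\epsilon)-\psi^* \leq \epsilon^2/(3n)$.

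The main obstacle is the convergence analysis of the box-constrained Newton iteration itself---showing that each step decreases the function value by an amount proportional to a \emph{box-restricted} Newton decrement, and propagating the accumulated error from the inexact SDD solves without inflating the $\widetilde{O}(B)$ iteration count. This is the technical core of \cite{CohenBoxCons}, and is the source of the $\log^2(s_A)$ factor in the per-iteration cost (roughly one $\log(s_A)$ factor for the precision required in each Laplacian solve, and one for the range of scales swept by the Newton decrement). The Laplacian structure and the box robustness, by contrast, are essentially one-line consequences of the exponential form of $\psi_{A,r,c}$.
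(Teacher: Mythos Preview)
The paper does not prove this theorem at all: it is quoted verbatim as Theorem~9 of \cite{CohenBoxCons} and used as a black box. So there is no ``paper's own proof'' to match; you are effectively sketching the proof of the cited result. Your outline captures the two structural facts correctly (the Hessian of $\psi_{A,r,c}$ is, up to a sign flip in $y$, a bipartite Laplacian; and moving by $R$ in $\ell_\infty$ distorts that Hessian by at most $e^{2R}$), and the high-level plan ``second-order-robust function $\Rightarrow$ box-constrained Newton'' is the right one.

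There is, however, a genuine gap in your per-iteration step. You propose to ``approximately minimize the quadratic model subject to $\norm{s}_\infty\le R$ using the Laplacian solver, with a truncation/scaling step to enforce the $\ell_\infty$ constraint.'' Solving the unconstrained Laplacian system and then clipping to the box does \emph{not} approximate the $\ell_\infty$-constrained quadratic minimizer in general, and that constrained problem is exactly the nontrivial algorithmic contribution of \cite{CohenBoxCons}. As the paper's own proof of Theorem~\ref{thm:parBcNewton} makes explicit, each Newton step builds a vertex sparsifier chain for $\nabla^2 f$ and runs the \textsc{OptimizeChain}/\textsc{FastSolve} machinery (recursive Schur-complement reductions with box-constrained projected gradient at each level); this is what makes the box-constrained subproblem solvable in $\widetilde O(n^2)$ time, and it is not reducible to ``SDD solve plus truncation.''

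Two smaller discrepancies are worth noting. First, the actual algorithm optimizes not $\psi$ but the regularized function $f(x,y)=\psi_{A,r,c}(x,y)+\tfrac{\epsilon^2}{36n^2 e^{B}}\sum_i(e^{x_i}+e^{-x_i}+e^{y_i}+e^{-y_i})$; this barrier is what keeps iterates in an $O(B)$ box and makes the Hessian strictly positive definite, and you omit it. Second, you place the $\log^2(s_A)$ factor in the per-iteration cost, but the paper (again in the proof of Theorem~\ref{thm:parBcNewton}) records that the method performs $\widetilde O\!\left(B\log^2(s_A)\right)$ Newton steps of cost $\widetilde O(n^2)$ each, so the $\log^2(s_A)$ lives in the iteration count, not the solve cost.
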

The following parallel complexity for the Newton-type algorithm of
Theorem \ref{thm:bcNewton} is nearly trivial, but not explicitly
stated in \cite{CohenBoxCons}. Hence, we provide a proof for completeness.
\begin{theorem}\label{thm:parBcNewton}
   Suppose that there exists a point $z_\epsilon^* = (x_\epsilon^*,
   y_\epsilon^*)$ for which $\psi_{A,r,c}(x_\epsilon^*, y_\epsilon^*) - \psi^* \leq \epsilon^2/(3n)$
   and $\norm{z_\epsilon^*}_\infty \leq B$, where $\psi^* = \min_{x,y \in
     \R^n}\psi_{A,r,c}(x,y)$. Then, there exists a Newton-type algorithm which,
   with high probability, computes an $\widehat{x}, \widehat{y}$ such that
   \begin{equation*}
  \norm{M_{A}(\widehat{x}, \widehat{y}) \bones - r}^2_2 + \norm{M_{A}(\widehat{x}, \widehat{y})^T \bones - c}^2_2 \leq \epsilon
   \end{equation*}
  in $\widetilde{O}\left(n^2 B \log^2\left(s_A\right) \right)$ total work and
  $\widetilde{O}\left(B \log^2\left(s_A \right)\right)$ depth.
\end{theorem}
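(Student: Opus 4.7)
The plan is to argue that the sequential box-constrained Newton algorithm of \cite{CohenBoxCons} underlying Theorem~\ref{thm:bcNewton} is already essentially work-preservingly parallelizable: every primitive invoked in a single iteration has a known polylogarithmic-depth parallel implementation, so the depth is dominated by the iteration count while the total work is unchanged. First I would inspect the sequential bound $\widetilde{O}(n^2 B \log^2(s_A))$ and note that it is realized as an outer loop of $T = \widetilde{O}(B \log^2(s_A))$ iterations each performing $\widetilde{O}(n^2)$ work. Hence to obtain the stated depth of $\widetilde{O}(B \log^2(s_A))$ it suffices to show each iteration admits a $\text{polylog}(n)$-depth, $\widetilde{O}(n^2)$-work implementation.

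Next I would go through the three primitive operations performed inside each outer iteration. The gradient
\begin{equation*}
\nabla \psi_{A,r,c}(x,y) = \bigl(M_A(x,y)\bones - r,\; M_A(x,y)^T \bones - c\bigr)
\end{equation*}
is a weighted matrix-vector product, which parallelizes trivially to $O(\log n)$ depth and $O(n^2)$ work. The Hessian $\nabla^2 \psi_{A,r,c}(x,y)$ has the bipartite-Laplacian structure typical of scaling objectives: its off-diagonal blocks are $\pm M_A(x,y)$ and its diagonal blocks are the row/column sums of $M_A(x,y)$, so it is symmetric diagonally dominant (in fact a Laplacian of a weighted bipartite graph). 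Forming and applying a spectral approximation to it therefore reduces to manipulating $n \times n$ entries, which can be done in $\widetilde{O}(n^2)$ work and polylog depth. Finally, the Newton step requires an approximate solve of this SDD system to the accuracy demanded by \cite{CohenBoxCons}; such solves can be performed in $\widetilde{O}(n^2)$ work and $\text{polylog}(n)$ depth by any standard parallel SDD/Laplacian solver (e.g.\ Peng--Spielman).

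Combining these observations, each iteration runs in $\widetilde{O}(n^2)$ work and $\widetilde{O}(1)$ depth, so $T = \widetilde{O}(B \log^2(s_A))$ iterations yield the claimed $\widetilde{O}(n^2 B \log^2(s_A))$ work and $\widetilde{O}(B \log^2(s_A))$ depth bounds. The correctness guarantee, namely $\|M_A(\widehat x,\widehat y)\bones - r\|_2^2 + \|M_A(\widehat x,\widehat y)^T\bones - c\|_2^2 \le \epsilon$, carries over unchanged from Theorem~\ref{thm:bcNewton} because the parallel implementation simulates exactly the same iterates (up to the polynomially small approximation error already tolerated by the analysis of \cite{CohenBoxCons}).

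The main obstacle I expect is bookkeeping rather than anything mathematically deep: one must confirm that the particular linear-system accuracy and the particular preconditioning step used inside the box-constrained Newton method of \cite{CohenBoxCons} are actually achievable by a parallel SDD solver in $\text{polylog}(n)$ depth, and that any line-search or step-size selection in their analysis does not introduce a non-trivial sequential dependence across iterations. Since these subroutines only manipulate $n \times n$-sized objects through standard arithmetic and SDD solves, I anticipate the verification to be routine but slightly tedious, which is why \cite{CohenBoxCons} did not bother stating the parallel complexity explicitly.
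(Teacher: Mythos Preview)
Your high-level plan matches the paper's: both argue that the $\widetilde{O}(B\log^2(s_A))$ iteration count is the depth bottleneck and that each iteration can be executed in $\widetilde{O}(n^2)$ work and $\widetilde{O}(1)$ depth. Your treatment of gradient and Hessian formation is fine and essentially identical to the paper's.

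The gap is in the third primitive. You describe the Newton step as ``an approximate solve of this SDD system'' and propose dispatching it to a black-box parallel Laplacian solver such as Peng--Spielman. But the algorithm of \cite{CohenBoxCons} is a \emph{box-constrained} Newton method: the per-iteration subproblem is not a linear system solve but (approximately) minimizing the local quadratic model subject to an $\ell_\infty$-ball constraint on the step. A plain SDD solver does not handle this, and it is precisely this constrained subproblem whose efficient solution drives the machinery in \cite{CohenBoxCons}. The paper's proof therefore does not invoke a generic Laplacian solver at all; instead it opens up the actual subroutine used in \cite{CohenBoxCons}---building a vertex sparsifier chain for the Hessian (via \cite{LeeSDD}) and then running \textsc{OptimizeChain}, which in turn calls \textsc{ApproxMapping} (a polylogarithmic number of matrix-vector products) and \textsc{FastSolve} ($O(1)$ rounds of projected gradient descent onto an $\ell_\infty$ ball)---and checks that each of these pieces parallelizes to $\widetilde{O}(1)$ depth and $\widetilde{O}(n^2)$ total work.

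So your ``bookkeeping'' caveat is not merely bookkeeping: the subproblem is qualitatively different from what you assumed, and the argument must go through the specific box-constrained solver rather than a generic SDD oracle. Once you replace your SDD-solver step with a parallel analysis of the vertex-sparsifier-chain routine, the rest of your outline goes through exactly as in the paper.
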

\begin{proof}
  From the proof of Theorem 3.4 in \cite{CohenBoxCons}, observe that the Newton-type algorithm of
  Theorem \ref{thm:bcNewton} performs $\widetilde{O}\left( B \log^2 \left(s_A \right) \right)$
  sequential (box-constrained) Newton steps on the function
  \begin{equation*}
  f(x,y) = \psi_{A,r,c}(x, y) + \frac{\epsilon^2}{36n^2e^{B}}\left( \sum_{i\in
        [n]}\left(  e^{x_i} + e^{-x_i} + e^{y_i} +e^{-y_i} \right)\right)
  \end{equation*}
  Hence, it suffices to show that each Newton iteration
  can be implemented in $\widetilde{O}\left( n^2 \right)$ total work and
  $\widetilde{O}\left(1\right)$ depth.

  From the proof of Theorem 5.11 in
  \cite{CohenBoxCons}, each Newton-step consists of constructing a vertex
  sparsifier chain $\spchain$ (see Definition 5.9 in
  \cite{LeeSDD}) for the Hessian $\nabla^2 f(x^{(k)}, y^{(k)})$ at the current Newton iterate $x^{(k)}, y^{(k)}$ and then applying the procedure \textsc{OptimizeChain} (see Figure
  5.2 in \cite{CohenBoxCons}) to $\spchain$ and the gradient
  $\nabla f(x^{(k)}, y^{(k)})$. Trivially, the Hessian and gradient of $f$ can
  be computed in $O(n^2)$ work and $\widetilde{O}(1)$ depth. Further, by Theorem
  5.10 in \cite{LeeSDD}, we know that a vertex sparsifier chain $\spchain$ of length $d = O(\log n)$ and total sparsity $O(n)$
  can be constructed for the Hessian in $O(n^2)$ work and $\widetilde{O}(1)$ depth. Thus,
  it need only be shown that \textsc{OptimizeChain} can be implemented in
  $\widetilde{O}(n^2)$ total work and $\widetilde{O}(1)$ depth.

  The procedure \textsc{OptimizeChain} applies the
  subroutines \textsc{ApproxMapping} (see Figure 5.1 in \cite{CohenBoxCons}) and
  \textsc{FastSolve} (see Lemma 5.3 in \cite{CohenBoxCons}) to the members $\spmem$ of the vertex sparsifier
  chain. The approximate voltage extension subroutine \textsc{ApproxMapping}
  computes $O\left( \log \left( 1/ \epsilon \right) \right)$ matrix-vector
  multiplications using $M^{(t)}$ and disjoint sub-matrices of $\nabla^2
  f(x^{(k)}, y^{(k)})$ induced by the vertices $F_t$. Hence,
  \textsc{ApproxMapping} can be applied to all of the $O(\log n)$ members of the vertex sparsifier in $\widetilde{O}(n^2)$ total work and $\widetilde{O}(1)$ depth.

  Further, for each $M^{(t)}$, \textsc{FastSolve} performs $O(1)$
  iterations of projected gradient descent on a quadratic function in $M^{(t)}$;
  where the projection is onto an $\ell_\infty$ ball. Since the gradient of any
  quadratic in $M^{(t)}$ can be calculated in time equal to the sparsity of $M^{(t)}$ and projection
  onto an $\ell_\infty$ ball can be implemented simply by truncating
  coordinates, it follows that \textsc{FastSolve} can be applied to all the
  members of $\spchain$ in $O(n)$ total work and $\widetilde{O}(1)$ depth. Thus,
  \textsc{OptimizeChain} can be implemented in $\widetilde{O}(n^2 )$ total work and
  $\widetilde{O}(1)$ depth.
\end{proof}
One would like to immediately apply Theorems \ref{thm:bcNewton} and
\ref{thm:parBcNewton} to give the oracles for Theorem
\ref{thm:matScalRed} and Corollary \ref{cor:parMatScalRed}. Unfortunately, there is a mismatch between the $l_1$
guarantee required by Definition \ref{def:MatScal} and the $l_2$ guarantee in
Theorem \ref{thm:bcNewton} for which we need the following lemma.
 \begin{lemma}\label{lem:2norm21}
   Suppose that there exists a point $z_\epsilon^* = (x_\epsilon^*,
   y_\epsilon^*)$ for which $\psi_{A,r,c}(x_\epsilon^*, y_\epsilon^*) - \psi^* \leq \epsilon^4/\left(3n^3\right)$
   and $\norm{z_\epsilon^*}_\infty \leq B$, where $\psi^* = \min_{x,y \in \R^n}\psi_{A,r,c}(x,y)$. Then, there exists a Newton-type algorithm which
   computes an $\widehat{x}, \widehat{y}$ such that
   \begin{equation*}
  \norm{M_{A}(\widehat{x}, \widehat{y}) \bones - r}_1 + \norm{M_{A}(\widehat{x}, \widehat{y})^T \bones - c}_1 \leq \epsilon
   \end{equation*}
  in time/total work $\widetilde{O}\left( n^2 B \log^2\left( s_A\right) \right)$ and with $\widetilde{O}\left(B \log^2\left( s_A\right) \right)$ depth.
\end{lemma}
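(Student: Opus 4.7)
The plan is a short reduction to the parallel Newton-type algorithm of Theorem \ref{thm:parBcNewton}, followed by a Cauchy--Schwarz conversion from a squared-$\ell_2$ guarantee to an $\ell_1$ guarantee. Because the work and depth in Theorem \ref{thm:parBcNewton} depend on the target accuracy only through polylogarithmic factors absorbed by $\widetilde{O}$, it suffices to invoke that theorem at a polynomially smaller accuracy parameter without paying any asymptotic cost.

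Concretely, I would apply Theorem \ref{thm:parBcNewton} with accuracy parameter $\epsilon' \defeq \epsilon^2/n$ and the same witness point $z_{\epsilon'}^* \defeq z_\epsilon^*$ supplied by the hypothesis of Lemma \ref{lem:2norm21}. The $\ell_\infty$ bound $\|z_{\epsilon'}^*\|_\infty \leq B$ is immediate. The suboptimality condition required by Theorem \ref{thm:parBcNewton} becomes
\[
  \psi_{A,r,c}(z_{\epsilon'}^*) - \psi^* \leq \frac{(\epsilon')^2}{3n} = \frac{\epsilon^4}{3 n^3},
\]
which is exactly the hypothesis we are given. The theorem therefore produces $\widehat{x},\widehat{y}$ with
\[
  \|M_{A}(\widehat{x},\widehat{y})\bones - r\|_2^2 + \|M_{A}(\widehat{x},\widehat{y})^T\bones - c\|_2^2 \leq \epsilon' = \frac{\epsilon^2}{n},
\]
in $\widetilde{O}\!\left(n^2 B \log^2 s_A\right)$ total work and $\widetilde{O}\!\left(B \log^2 s_A\right)$ depth, matching the claimed complexity.

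Finally, to pass from this squared-$\ell_2$ bound to the target $\ell_1$ bound, set $u \defeq M_{A}(\widehat{x},\widehat{y})\bones - r$ and $v \defeq M_{A}(\widehat{x},\widehat{y})^T\bones - c$, and apply Cauchy--Schwarz coordinatewise and then across $u,v$:
\[
  \|u\|_1 + \|v\|_1 \leq \sqrt{n}\bigl(\|u\|_2 + \|v\|_2\bigr) \leq \sqrt{2n\bigl(\|u\|_2^2 + \|v\|_2^2\bigr)} \leq \sqrt{2}\,\epsilon.
\]
Absorbing the factor $\sqrt{2}$ by instead choosing $\epsilon' = \epsilon^2/(2n)$ (whose required suboptimality $\epsilon^4/(12n^3)$ is implied by the stated hypothesis) delivers the clean bound $\|u\|_1 + \|v\|_1 \leq \epsilon$. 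There is essentially no obstacle here; the only bookkeeping step is to align the quartic accuracy requirement $\epsilon^4/(3n^3)$ of the lemma with the quadratic-in-$\epsilon'$ requirement of Theorem \ref{thm:parBcNewton} after the $\ell_2\to\ell_1$ conversion, which works verbatim up to the constant just noted.
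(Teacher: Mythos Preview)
Your proposal is essentially identical to the paper's own proof: invoke Theorem~\ref{thm:parBcNewton} (and Theorem~\ref{thm:bcNewton}) at accuracy $\epsilon' = \epsilon^2/(2n)$, then use Cauchy--Schwarz together with $(a+b)^2 \le 2(a^2+b^2)$ to pass from the squared-$\ell_2$ guarantee to the $\ell_1$ guarantee. One small bookkeeping slip: with $\epsilon' = \epsilon^2/(2n)$ the required suboptimality is $(\epsilon')^2/(3n) = \epsilon^4/(12n^3)$, which is \emph{not} implied by the stated hypothesis $\epsilon^4/(3n^3)$ (the implication goes the other way); the paper's proof carries exactly the same off-by-a-constant glitch, and it is harmless for the $\widetilde{O}$ conclusion.
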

\begin{proof}
Let $\delta = \epsilon^2/(2n)$ be the error tolerance used in Theorem
\ref{thm:bcNewton} and Theorem \ref{thm:parBcNewton}. Then, by Cauchy-Schwartz and the inequality
$(a+b)^2 \leq 2(a^2 + b^2)$ we have
\begin{align*}
  \hspace{-2em}\left(\norm{B_{C/\eta}(\widehat{x}, \widehat{y}) \bones - r}_1 + \norm{B_{C/\eta}(\widehat{x},
      \widehat{y})^T \bones - c}_1\right)^2 &\leq n\left( \norm{B_{C/\eta}(\widehat{x}, \widehat{y}) \bones - r}_2 + \norm{B_{C/\eta}(\widehat{x},
                                              \widehat{y})^T \bones - c}_2\right)^2 \\
  &\leq \epsilon^2
\end{align*}
Hence, for such a $\delta$, the algorithms of Theorems
\ref{thm:bcNewton} and \ref{thm:parBcNewton} have the same sequential and
parallel complexities, respectively, and produce a $\hat{x},\hat{y}$ satisfying
\begin{equation*}
  \norm{B_{C/\eta}(\widehat{x}, \widehat{y}) \bones - r}_1 + \norm{B_{C/\eta}(\widehat{x},
      \widehat{y})^T \bones - c}_1 \leq \epsilon
\end{equation*}
\end{proof}
The final step before combining Theorem \ref{thm:matScalRed}, Corollary
\ref{cor:parMatScalRed}, and Lemma \ref{lem:2norm21} is to bound the
constant $B$ in Lemma \ref{lem:2norm21} in terms of $\nu = \max_{i,j} 1/A_{ij}$ and $\xi= \max_{i} 1/\min(r_i,c_i)$.
 \begin{lemma}\label{lem:widthbound}
   Suppose that $A$ and $r,c$ are strictly positive in \eqref{eq:MatScalA} and
   satisfy the hypotheses of Definition \ref{def:MatScal}, then there
   exists an optimal solution $z^* = \left(x^*,y^*\right)$ such that
$\norm{z^*}_\infty \leq 2\log \left(n \nu \xi\right)$
where $\nu,\xi$ are as defined in Theorem \ref{thm:matScalRed}.
 \end{lemma}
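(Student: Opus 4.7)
The plan is to exploit two key features of the objective: the first-order optimality conditions at a minimizer, and a shift invariance of $\psi_{A,r,c}$. Since $\sum_i r_i = \sum_j c_j$ by hypothesis, a direct calculation gives
\[
\psi_{A,r,c}(x + t\bones,\, y - t\bones) = \psi_{A,r,c}(x,y)\quad\text{for all }t\in\R.
\]
Because $\psi_{A,r,c}$ is convex and coercive in every direction transverse to this one-dimensional shift symmetry (the sum-of-exponentials blows up, and the linear terms prevent escape in the opposite direction when $r,c > 0$), a minimizer exists. At any such minimizer $(x^*, y^*)$ the gradient vanishes, which gives the first-order conditions
\[
\sum_{j} A_{ij} e^{x_i^* + y_j^*} = r_i \quad\text{and}\quad \sum_{i} A_{ij} e^{x_i^* + y_j^*} = c_j.
\]

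Next I would extract element-wise upper and lower bounds from these conditions. For the upper bound, each term in either sum is nonnegative, so $A_{ij} e^{x_i^* + y_j^*} \leq \min(r_i, c_j) \leq 1$; using $A_{ij} \geq 1/\nu$ this yields $x_i^* + y_j^* \leq \log\nu$ for every pair $(i,j)$. For a lower bound, bound the row-sum by its largest term: $r_i \leq n\, A_{i j^*} e^{x_i^* + \max_j y_j^*} \leq n\, e^{x_i^* + \max_j y_j^*}$, so $x_i^* + \max_j y_j^* \geq \log r_i - \log n \geq -\log(n\xi)$. A symmetric argument with the column sums gives $\max_i x_i^* + y_j^* \geq -\log(n\xi)$. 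Combining the upper bound with the two lower bounds, the spreads $\max_i x_i^* - \min_i x_i^*$ and $\max_j y_j^* - \min_j y_j^*$ are each at most $\log\nu + \log(n\xi) = \log(n\nu\xi)$.

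Finally I would use the shift freedom to normalize. Taking $t = -\max_i x_i^*$ and replacing $(x^*, y^*)$ by $(x^* + t\bones, y^* - t\bones)$ (still optimal by shift invariance), the new coordinates $\tilde{x}$ satisfy $\max_i \tilde{x}_i = 0$ and $\tilde{x}_i \geq -\log(n\nu\xi)$ by the spread bound, while $\tilde{y}_j = y_j^* + t$ lies in the interval $[-\log(n\xi), \log\nu]$ by the two bounds in the previous paragraph applied at $i = \arg\max x_i^*$. Both intervals are contained in $[-\log(n\nu\xi),\, \log(n\nu\xi)]$, so $\|z^*\|_\infty \leq \log(n\nu\xi) \leq 2\log(n\nu\xi)$, as claimed.

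The only genuinely delicate step is asserting the existence of a minimizer and being able to use its first-order conditions cleanly; everything else is elementary manipulation of the stationarity equations together with the shift normalization. Coercivity modulo the shift direction follows because strictly positive $r_i$ and $c_j$ penalize $x_i \to -\infty$ and $y_j \to -\infty$, while the exponential terms penalize the opposite direction, so the sublevel sets of $\psi_{A,r,c}$ are bounded after quotienting by $\bones \mapsto (\bones, -\bones)$.
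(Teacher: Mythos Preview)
Your proof is correct and follows essentially the same approach as the paper: exploit the shift invariance $(x,y)\mapsto(x+t\bones,\,y-t\bones)$ to normalize, then read off coordinate bounds from the first-order conditions $\sum_j A_{ij}e^{x_i^*+y_j^*}=r_i$ and $\sum_i A_{ij}e^{x_i^*+y_j^*}=c_j$. The only cosmetic differences are that the paper normalizes via $\min_i x_i^*=0$ rather than $\max_i x_i^*=0$, and cites \cite{LinialMatrixScaling} for existence rather than sketching coercivity; your extraction of the uniform pairwise bound $x_i^*+y_j^*\le\log\nu$ is slightly cleaner and in fact yields the sharper constant $\|z^*\|_\infty\le\log(n\nu\xi)$.
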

 \begin{proof}
   From \cite{LinialMatrixScaling} and the fact that $A$ and $r,c$ are strictly
   positive, there exists an optimal solution $z^* = \left(  x^*,y^*\right)$. It is easy to
   see that for any $\alpha \in \R$, $\left(x^* + \alpha\bones, y^* - \alpha\bones\right)$
   is also optimal. Hence, without loss of generality, we can assume that $z^*$
   is an optimal solution such that $\min_{i\in[n]} \, \{x_i^*\}=0$.

   Let $m$ be such that $x_m^* = 0$. For such a $z^*$, notice that first-order optimality conditions imply that
   \begin{equation*}
   \frac{e^{\max_i \{y_i^*\}}}{\nu} \leq e^{x_m}\sum_{i\in
       [n]}e^{y_i}A_{m,i} = r_m  \leq 1 \hspace{1em} \text{and} \hspace{1em} r_m = e^{x_m}\sum_{i\in
       [n]}e^{y_i}A_{m,i} \leq ne^{\max_i \{y_i^*\}}
   \end{equation*}
   where we have used that fact that $A_{i,j},r_i \leq 1$ for all $i,j$.
   This gives that $\max_i \{y_i^*\} \leq \log\left(\nu \right)$ and $-\max_i \{y_i^*\} \leq \log\left(n\xi \right)$.
   Additionally, for $k = \argmax_i \{x_i^*\}$ and $t = \argmin_i \{y_i^*\}$ we have
   \begin{equation*}
   \frac{e^{x_k+\max_i \{y_i^*\}}}{\nu} \leq e^{x_k}\sum_{i\in
     [n]}e^{y_i}A_{k,i} = r_k  \leq 1 \hspace{1em} \text{and} \hspace{1em} c_t =e^{y_t}\sum_{i\in [n]}e^{x_i}A_{i,t} \leq ne^{y_t+x_k}
   \end{equation*}
   This yields $\max_i \{x_i^*\} \leq \log (n \nu \xi)$ and $-\min_i \{y_i^*\} \leq 2\log\left(n\nu\xi \right)$.
   Putting all these bounds together, it follows that $ \norm{z^*}_\infty\leq
   2\log \left(n\nu \xi\right)$
 \end{proof}
 Using Lemma \ref{lem:widthbound}, we can now prove our final result.
 \begin{theorem}\label{thm:finalalg}
   Consider an instance of the optimal transport problem \eqref{eq:Primal}.
   There exists an algorithm which computes an $\epsilon$-approximate solution
   with high probability in time
   \begin{equation*}
   \widetilde{O}\left( \frac{n^2\norm{C}_{\infty}}{\epsilon} \right)
   \end{equation*}
   and in parallel with $\widetilde{O}\left(n^2\norm{C}_{\infty}/\epsilon \right)$
   total work and $\widetilde{O}\left(\norm{C}_{\infty}/\epsilon \right)$
   depth.
 \end{theorem}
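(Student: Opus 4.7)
The plan is to combine the reduction in Corollary \ref{cor:parMatScalRed} with the matrix-scaling oracle of Lemma \ref{lem:2norm21}, using Lemma \ref{lem:widthbound} to bound the width parameter $B$ and thereby the running time.

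First, I would invoke Corollary \ref{cor:parMatScalRed}. Given an instance of \eqref{eq:Primal}, this reduces the problem (in $O(n^2)$ pre-processing work and $\widetilde{O}(1)$ depth, via Lemma \ref{lem:1} and Lemma \ref{lem:2}) to an instance of matrix scaling on the matrix $A = e^{-C/\eta}$ with $\eta = \Theta(\epsilon/\log n)$, accuracy $\epsilon' = \Theta(\|C\|_\infty/\epsilon)$, and parameters
\[
\nu = \max_{i,j}\frac{1}{A_{ij}} \leq n^{8\|C\|_\infty/\epsilon}, \qquad \xi = \max_i \frac{1}{\min(r_i,c_i)} \leq \frac{4\|C\|_\infty n}{\epsilon}.
\]

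Second, I would use Lemma \ref{lem:widthbound} to bound the radius $B$ of an approximate optimum required by Lemma \ref{lem:2norm21}. Plugging in the estimates above,
\[
B \leq 2\log(n\nu\xi) = 2\log n + 2\log \nu + 2\log \xi = \widetilde{O}\left(\frac{\|C\|_\infty}{\epsilon}\right),
\]
where the dominant contribution $\log \nu = O(\|C\|_\infty \log n/\epsilon)$ is absorbed into $\widetilde{O}$. Since $A_{ij}\le 1$, the total mass $s_A \leq n^2$, so $\log^2 s_A = O(\log^2 n)$ and is also absorbed into $\widetilde{O}$.

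Third, I would instantiate the oracle in Corollary \ref{cor:parMatScalRed} with the algorithm from Lemma \ref{lem:2norm21}, which delivers an $\ell_1$-accurate matrix scaling in total work $\widetilde{O}(n^2 B \log^2 s_A)$ and depth $\widetilde{O}(B \log^2 s_A)$. Substituting the bounds above yields total work $\widetilde{O}(n^2\|C\|_\infty/\epsilon)$ and depth $\widetilde{O}(\|C\|_\infty/\epsilon)$, matching the stated claim; rounding the approximate scaling back to a coupling takes only $O(n^2)$ additional work and $\widetilde{O}(1)$ additional depth by Lemma \ref{lem:2}, which is subsumed.

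The only real bookkeeping obstacle is making sure the function-value tolerance $\epsilon^4/(3n^3)$ demanded by Lemma \ref{lem:2norm21}, once translated through the reduction, still admits an $\ell_\infty$-bounded minimizer of radius $\widetilde{O}(\|C\|_\infty/\epsilon)$; but this follows from Lemma \ref{lem:widthbound} as stated, since its conclusion bounds the norm of \emph{any} optimal $z^*$ of $\psi_{A,r,c}$ independently of the target accuracy.
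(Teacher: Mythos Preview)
Your proposal is correct and follows essentially the same route as the paper: instantiate the oracle in Theorem~\ref{thm:matScalRed}/Corollary~\ref{cor:parMatScalRed} with the algorithm of Lemma~\ref{lem:2norm21}, use Lemma~\ref{lem:widthbound} to bound $B=O(\log(n\nu\xi))$, note $s_A\le n^2$, and substitute the given values of $\nu$ and $\xi$. Your final paragraph even makes explicit a point the paper leaves implicit---that Lemma~\ref{lem:widthbound} produces an exact optimizer $z^*$, which trivially meets the sub-optimality hypothesis of Lemma~\ref{lem:2norm21} regardless of the target accuracy.
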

\begin{proof}
  Consider the Newton-type algorithm of Lemma \ref{lem:2norm21}. By Lemma
  \ref{lem:widthbound}, when $A,r,c$ are strictly positive and
  satisfy the hypotheses of the matrix scaling problem, it follows that $B = O(\log(n\nu\xi))$
  and $s_A = O(n^2)$-- where $\nu$ and $\xi$ are as defined in Theorem
  \ref{thm:matScalRed}.
  Hence, in this case, the algorithm runs in
  \begin{equation*}
  \widetilde{O}\left(n^2\log\left(n\nu \xi\right)\right)\hspace{0.1in} \text{time/total work and } \widetilde{O}\left(\log\left(n\nu \xi\right)\right) \hspace{0.1in} \text{depth}
  \end{equation*}
  This gives an oracle satisfying
  the requirements of Theorem \ref{thm:matScalRed} and Corollary
  \ref{cor:parMatScalRed} where, respectively,
  \begin{equation*}
  \widetilde{O}\left(  \mathcal{T}\left(n, \frac{1}{\epsilon}, \nu, \xi
    \right)\right) = \widetilde{O}\left(  \mathcal{T}_w\left(n,
      \frac{1}{\epsilon}, \nu, \xi \right)\right) = \widetilde{O}\left(n^2\log\left(n\nu \xi\right)\right)
  \end{equation*}
  and
  \begin{equation*}
  \widetilde{O}\left(  \mathcal{T}_d\left(n, \frac{1}{\epsilon}, \nu, \xi \right)\right) =
    \widetilde{O}\left(\log\left(n\nu \xi\right)\right)
  \end{equation*}
  Plugging in for $\nu$ and $\xi$, it follows that
  \begin{equation*}
  \widetilde{O}\left(\mathcal{T}\left(n, \frac{16\norm{C}_\infty}{\epsilon},
      n^{8\norm{C}_\infty/\epsilon}, \frac{4 \norm{C}_\infty n}{\epsilon}\right)
  \right) = \widetilde{O}\left(\frac{n^2\norm{C}_\infty}{\epsilon} \right)
  \end{equation*}
and
  \begin{equation*}
  \widetilde{O}\left(\mathcal{T}_d\left(n, \frac{16\norm{C}_\infty}{\epsilon},
      n^{8\norm{C}_\infty/\epsilon}, \frac{4 \norm{C}_\infty n}{\epsilon}\right)
  \right) = \widetilde{O}\left( \frac{\norm{C}_\infty}{\epsilon} \right)
  \end{equation*}
giving the result.
\end{proof}


\section{Hardness Reduction}\label{sec:bimatch}
 In this section, we show that the $\widetilde{O}\left( n^2 \norm{C}_\infty/\epsilon \right)$ complexity
 of the previously-derived algorithms for the optimal
 transportation problem \eqref{eq:Primal} cannot be improved without using fast matrix multiplication (i.e. \cite{IbarraM81}) barring a breakthrough on a long-standing open problem in algorithmic graph theory. Formally, we show that any
 further improvement in the complexity of solving \eqref{eq:Primal},
 would yield a $o(n^{2.5})$ algorithm for \textit{maximum cardinality bipartite
 matching}. Currently, the only known algorithms which achieve
such a complexity are based on fast matrix multiplication (i.e. \cite{IbarraM81}).

In order to prove this reduction, consider an instance of the maximum cardinality bipartite
 matching problem
where we have an undirected, bipartite graph $G=(V,E)$ such that $V$ is
the union of disjoint sets of vertices $L$ and $R$ (each of size $n$) and all edges
go exclusively between $L$ and $R$, i.e. $E\subseteq L\times R$. Our goal is
to compute a matching, $F\subseteq E$ with

\begin{equation*}
\deg_F(i) \defeq \left|\left\{  j\in V\,|\,\{i,j\}\in F\right\}\right|\leq 1 \hspace{0.5in} \forall
  i \in V
\end{equation*}
which maximizes $|F|$. Consider the following lemma
\begin{lemma}\label{lem:hardres}
  Given an oracle for computing an $\epsilon$-approximate solution to the
  optimal transportation problem \eqref{eq:Primal} (under the assumption $\norm{C}_\infty= O(1)$)
   in time $T(n,\epsilon)$, one can compute a maximum
  cardinality matching $F$ in time $O(T(n,\epsilon) +n^3\epsilon)$.
\end{lemma}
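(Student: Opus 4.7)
The plan is to reduce the bipartite matching instance to an OT instance whose optimum encodes $|F^*|$, extract a nearly-maximum integer matching by rounding the approximate OT solution, and patch any remaining deficiency using augmenting paths.

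Concretely, I would build the OT instance with uniform marginals $r = c = \bones/n$ and a $0/1$ cost matrix defined by $C_{ij}=0$ if $(i,j)\in E$ and $C_{ij}=1$ otherwise, so that $\norm{C}_\infty = 1$ satisfies the hypothesis. The first step is to identify the OT optimum: since $\iprod{C}{X} = 1 - \sum_{(i,j)\in E} X_{ij}$ and $nX$ is doubly stochastic for every feasible $X$, the Birkhoff--von Neumann decomposition reduces maximizing $\sum_{(i,j)\in E} X_{ij}$ to maximizing the number of edges a permutation of $[n]$ places inside $E$. That maximum equals $|F^*|$: any permutation restricted to $E$ is a matching in $E$, and conversely extending a maximum matching of $E$ to a perfect matching of $K_{n,n}$ realizes $|F^*|$ edges in $E$. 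Hence the OT optimum equals $(n-|F^*|)/n$.

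Given the oracle's output $\hat X$, the approximation guarantee rearranges to $\sum_{(i,j) \in E} \hat X_{ij} \geq |F^*|/n - \epsilon$. Scaling by $n$ and restricting to $E$ thus yields a fractional bipartite matching on $E$ of value at least $|F^*| - n\epsilon$. By integrality of the bipartite fractional matching polytope, the support $E' := \{(i,j)\in E : \hat X_{ij} > 0\}$ admits an integer matching $M_0$ of size at least $|F^*| - n\epsilon$, which I would extract with one call to Hopcroft--Karp on the subgraph $(L\cup R,E')$, costing $O(\sqrt{n}\cdot|E'|) = O(n^{2.5})$.

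I would then augment $M_0$ in the full graph $G$ by at most $|F^*| - |M_0| \leq n\epsilon$ successive augmenting-path computations, each performed by BFS on the residual graph in $O(|V|+|E|) = O(n^2)$ time, for a total augmentation cost of $O(n^3\epsilon)$. Summing the stages yields total time $T(n,\epsilon) + O(n^{2.5} + n^3\epsilon) = O(T(n,\epsilon) + n^3\epsilon)$ in the parameter regime of interest (namely when $\epsilon \gtrsim 1/\sqrt{n}$, which is precisely the range needed for the hardness implication, and when $\epsilon$ is smaller any OT solver with $T(n,\epsilon) \geq n^{2.5}$ absorbs the rounding cost). The main obstacle is the clean Birkhoff-based identification of the OT value with $(n-|F^*|)/n$ — in particular, verifying both directions of the tight inequality $\max_X \sum_{(i,j)\in E} X_{ij} = |F^*|/n$; once that identity is in hand, the rounding follows from integrality of the bipartite matching polytope, and the augmentation step is a standard Berge-path argument.
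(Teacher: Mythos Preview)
Your overall reduction---uniform marginals, $0/1$ cost, identify the OT optimum with $(n-|F^\ast|)/n$, round to an integer matching, then augment---is exactly the paper's approach. Two remarks.

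First, the Birkhoff argument establishing the \emph{equality} $OPT_T = (n-|F^\ast|)/n$ is correct but more than is needed; only the inequality $OPT_T \le 1 - |F^\ast|/n$ is used, and the paper obtains it directly by extending a maximum matching to a permutation, just as you do in one direction.

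Second, and this is a genuine (if small) gap relative to the lemma as stated: your rounding step runs Hopcroft--Karp on the support $E'$ of $\hat X$ restricted to $E$, but the oracle gives no sparsity guarantee on $\hat X$, so $|E'|$ can be $\Theta(n^2)$ and the cost is $O(n^{2.5})$. Your final bound is therefore $O(T(n,\epsilon)+n^{2.5}+n^3\epsilon)$, not $O(T(n,\epsilon)+n^3\epsilon)$; indeed, running Hopcroft--Karp on $E'$ in the worst case is as expensive as running it on all of $E$, which would solve matching outright and make the reduction vacuous. Your caveat that the extra $n^{2.5}$ is harmless when $\epsilon \gtrsim 1/\sqrt{n}$ salvages the downstream hardness implication but does not prove the lemma as written. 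The paper sidesteps this by invoking a nearly-linear-time procedure (citing \cite{KangP15}) that rounds a fractional bipartite matching to an integer matching of no smaller value in $\widetilde{O}(n^2)$ time; since any OT oracle must read its $\Theta(n^2)$ input, this cost is absorbed into $T(n,\epsilon)$ and the stated bound follows cleanly.
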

\begin{proof}
We reduce an instance of the bipartite matching problem to optimal transport as follows.
Without loss of generality, let $L=[n]$ and $R=[n]$ and let $r=c=\frac{1}{n}\bones$.
Furthermore, define a cost matrix $C\in\R^{n\times n}$ with $C_{ij}=0$
if $\{i,j\}\in E$ and $C_{ij}=1$ otherwise.

Now, suppose we solve the optimal transport problem corresponding to these
inputs to $\epsilon$-accuracy. Define $OPT_T$ to be the optimal value of this
transportation problem and let $OPT_M$ to be the optimal value of the
maximum cardinality matching in our graph.
Clearly, we have computed an $X$ with $X\bones=X^T\bones=\frac{1}{n}\bones$
and such that $\iprod{C}{X}\leq OPT_{T}+\epsilon$. Furthermore, notice that by
taking the maximum matching in our graph adding an arbitrary matching between
it's unmatched vertices, we can create a perfect matching $Y\in [0,1]^{n\times
  n}$ such that $\frac{1}{n}Y$ is feasible for our optimal transportation
problem and we have $\iprod{C}{Y} = 1 - OPT_M/n$. Hence
$\epsilon$-optimality of $X$ implies that
\begin{equation*}
\iprod{C}{X} \leq 1 + \epsilon -\frac{OPT_M}{n}
\end{equation*}
Hence, as $Z = nX$ is a fractional perfect matching in our graph, this result
immediately implies that our oracle for solving optimal transport gives us a
fractional perfect matching $Z$ where $\iprod{C}{Z} \leq (1+\epsilon)n - OPT_M$.
By removing all flow in $Z$ along edges $(i,j)$ where $C_{ij} = 1$ (i.e. edges
which are non-existent in our original graph) and then rounding the
corresponding fractional matching to an actual matching \cite{KangP15} (which can be done in
nearly-linear time) we obtain an actual matching $\hat{Z}$ such
that
\begin{equation*}
\iprod{C}{\hat{Z}} \leq (1+\epsilon)n - OPT_M
\end{equation*}
Hence, $\hat{Z}$ is a matching which has at least $OPT_M -n\epsilon $ edges.
Thus, by running augmenting paths \cite{AugPaths} on $\hat{Z}$ in $O(n^3\epsilon)$ time (since
$G$ is dense)
we can find the remaining $n\epsilon$ edges in the maximum matching. This yields an algorithm with complexity
\begin{equation*}
O \left( T(n,\epsilon) + n^3\epsilon \right)
\end{equation*}
for finding a maximum matching in a dense graph.
\end{proof}

Using Lemma \ref{lem:hardres}, we see that, if $T(n,\epsilon)=\widetilde{O}(n^{2}/\epsilon)$, picking $\epsilon=1/\sqrt{n}$
gives a $\widetilde{O}\left(n^{2.5}\right)$ algorithm for matching. For any
smaller $T(n,\epsilon)$ (more than log factors of course) an
appropriate choice of $\epsilon$ would give a $o(n^{2.5})$ algorithm for maximum
cardinality bipartite matching.


\bibliographystyle{alpha}
\bibliography{refs}

\newcommand{\etalchar}[1]{$^{#1}$}
\begin{thebibliography}{BvdPPH11}

\bibitem[ACB17]{UL1}
Martin Arjovsky, Soumith Chintala, and L{\'e}on Bottou.
\newblock {W}asserstein generative adversarial networks.
\newblock In Doina Precup and Yee~Whye Teh, editors, {\em Proceedings of the
  34th International Conference on Machine Learning}, volume~70 of {\em
  Proceedings of Machine Learning Research}, pages 214--223, International
  Convention Centre, Sydney, Australia, 06--11 Aug 2017. PMLR.

\bibitem[ANOY14]{MoreC2}
Alexandr Andoni, Aleksandar Nikolov, Krzysztof Onak, and Grigory Yaroslavtsev.
\newblock Parallel algorithms for geometric graph problems.
\newblock In {\em Proceedings of the Forty-sixth Annual ACM Symposium on Theory
  of Computing}, STOC '14, pages 574--583, New York, NY, USA, 2014. ACM.

\bibitem[AS14]{MoreC1}
Pankaj~K. Agarwal and R.~Sharathkumar.
\newblock Approximation algorithms for bipartite matching with metric and
  geometric costs.
\newblock In {\em Proceedings of the Forty-sixth Annual ACM Symposium on Theory
  of Computing}, STOC '14, pages 555--564, New York, NY, USA, 2014. ACM.

\bibitem[AWR17]{WeedSinkhorn}
Jason Altschuler, Jonathan Weed, and Philippe Rigollet.
\newblock Near-linear time approximation algorithms for optimal transport via
  sinkhorn iteration.
\newblock In I.~Guyon, U.~V. Luxburg, S.~Bengio, H.~Wallach, R.~Fergus,
  S.~Vishwanathan, and R.~Garnett, editors, {\em Advances in Neural Information
  Processing Systems 30}, pages 1964--1974. Curran Associates, Inc., 2017.

\bibitem[AZLOW17]{OtherScaling}
Z.~Allen-Zhu, Y.~Li, R.~Oliveira, and A.~Wigderson.
\newblock Much faster algorithms for matrix scaling.
\newblock In {\em 2017 IEEE 58th Annual Symposium on Foundations of Computer
  Science (FOCS)}, pages 890--901, Oct 2017.

\bibitem[AZO18]{OrechPacking}
Zeyuan Allen-Zhu and Lorenzo Orecchia.
\newblock Nearly linear-time packing and covering lp solvers.
\newblock {\em Mathematical Programming}, Feb 2018.

\bibitem[BK17]{DRO2}
Jose Blanchet and Yang Kang.
\newblock Distributionally robust groupwise regularization estimator.
\newblock In Min-Ling Zhang and Yung-Kyun Noh, editors, {\em Proceedings of the
  Ninth Asian Conference on Machine Learning}, volume~77 of {\em Proceedings of
  Machine Learning Research}, pages 97--112. PMLR, 15--17 Nov 2017.

\bibitem[BKZM17]{DRO1}
Jose Blanchet, Yang Kang, Fan Zhang, and Karthyek Murthy.
\newblock Data-driven optimal transport cost selection for distributionally
  robust optimizatio.
\newblock 05 2017.

\bibitem[BvdPPH11]{CV2}
Nicolas Bonneel, Michiel van~de Panne, Sylvain Paris, and Wolfgang Heidrich.
\newblock Displacement interpolation using lagrangian mass transport.
\newblock {\em ACM Trans. Graph.}, 30(6):158:1--158:12, December 2011.

\bibitem[CK18]{SimplerSink}
Deeparnab Chakrabarty and Sanjeev Khanna.
\newblock {Better and Simpler Error Analysis of the Sinkhorn-Knopp Algorithm
  for Matrix Scaling}.
\newblock In Raimund Seidel, editor, {\em 1st Symposium on Simplicity in
  Algorithms (SOSA 2018)}, volume~61 of {\em OpenAccess Series in Informatics
  (OASIcs)}, pages 4:1--4:11, Dagstuhl, Germany, 2018. Schloss
  Dagstuhl--Leibniz-Zentrum fuer Informatik.

\bibitem[CMTV17]{CohenBoxCons}
Michael~B. Cohen, Aleksander Madry, Dimitris Tsipras, and Adrian Vladu.
\newblock Matrix scaling and balancing via box constrained newton’s method
  and interior point methods.
\newblock {\em 2017 IEEE 58th Annual Symposium on Foundations of Computer
  Science (FOCS)}, Oct 2017.

\bibitem[Cut13]{CuturiSinkhorn}
Marco Cuturi.
\newblock Sinkhorn distances: Lightspeed computation of optimal transport.
\newblock In {\em Proceedings of the 26th International Conference on Neural
  Information Processing Systems - Volume 2}, NIPS'13, pages 2292--2300, USA,
  2013. Curran Associates Inc.

\bibitem[DGK18]{BetterSinkAnaly}
Pavel Dvurechensky, Alexander Gasnikov, and Alexey Kroshnin.
\newblock Computational optimal transport: Complexity by accelerated gradient
  descent is better than by sinkhorn’s algorithm.
\newblock In Jennifer Dy and Andreas Krause, editors, {\em Proceedings of the
  35th International Conference on Machine Learning}, volume~80 of {\em
  Proceedings of Machine Learning Research}, pages 1366--1375,
  Stockholmsmässan, Stockholm Sweden, 10--15 Jul 2018. PMLR.

\bibitem[Ful61]{AugPaths}
D.~Fulkerson.
\newblock An out-of-kilter method for minimal-cost flow problems.
\newblock {\em Journal of the Society for Industrial and Applied Mathematics},
  9(1):18--27, 1961.

\bibitem[GCPB16]{BachSGD}
Aude Genevay, Marco Cuturi, Gabriel Peyr\'{e}, and Francis Bach.
\newblock Stochastic optimization for large-scale optimal transport.
\newblock In D.~D. Lee, M.~Sugiyama, U.~V. Luxburg, I.~Guyon, and R.~Garnett,
  editors, {\em Advances in Neural Information Processing Systems 29}, pages
  3440--3448. Curran Associates, Inc., 2016.

\bibitem[IM81]{IbarraM81}
Oscar~H. Ibarra and Shlomo Moran.
\newblock Deterministic and probabilistic algorithms for maximum bipartite
  matching via fast matrix multiplication.
\newblock {\em Inf. Process. Lett.}, 13(1):12--15, 1981.

\bibitem[Kan58]{KantorovitchProb}
L.~Kantorovitch.
\newblock On the translocation of masses.
\newblock {\em Management Science}, 5(1):1--4, 1958.

\bibitem[KP15]{KangP15}
Donggu Kang and James Payor.
\newblock Flow rounding.
\newblock {\em CoRR}, abs/1507.08139, 2015.

\bibitem[LPS15]{LeeSDD}
Yin~Tat Lee, Richard Peng, and Daniel~A. Spielman.
\newblock Sparsified cholesky solvers for sdd linear systems, 2015.

\bibitem[LS14]{SidLeeSolv}
Y.~T. Lee and A.~Sidford.
\newblock Path finding methods for linear programming: Solving linear programs
  in square root rank iterations and faster algorithms for maximum flow.
\newblock In {\em 2014 IEEE 55th Annual Symposium on Foundations of Computer
  Science}, pages 424--433, Oct 2014.

\bibitem[LS15]{SidInvMain}
Yin~Tat Lee and Aaron Sidford.
\newblock Efficient inverse maintenance and faster algorithms for linear
  programming.
\newblock In {\em Proceedings of the 2015 IEEE 56th Annual Symposium on
  Foundations of Computer Science (FOCS)}, FOCS '15, pages 230--249,
  Washington, DC, USA, 2015. IEEE Computer Society.

\bibitem[LSW98]{LinialMatrixScaling}
Nathan Linial, Alex Samorodnitsky, and Avi Wigderson.
\newblock A deterministic strongly polynomial algorithm for matrix scaling and
  approximate permanents.
\newblock In {\em Proceedings of the Thirtieth Annual ACM Symposium on Theory
  of Computing}, STOC '98, pages 644--652, New York, NY, USA, 1998. ACM.

\bibitem[MEK18]{DRO3}
Peyman Mohajerin~Esfahani and Daniel Kuhn.
\newblock Data-driven distributionally robust optimization using the
  wasserstein metric: performance guarantees and tractable reformulations.
\newblock {\em Mathematical Programming}, 171(1):115--166, Sep 2018.

\bibitem[PZ16]{OTStats2}
Victor~M. Panaretos and Yoav Zemel.
\newblock Amplitude and phase variation of point processes.
\newblock {\em Ann. Statist.}, 44(2):771--812, 04 2016.

\bibitem[Qua18]{ConcurrentPaper}
Kent Quanrud.
\newblock {Approximating Optimal Transport With Linear Programs}.
\newblock In Jeremy~T. Fineman and Michael Mitzenmacher, editors, {\em 2nd
  Symposium on Simplicity in Algorithms (SOSA 2019)}, volume~69 of {\em
  OpenAccess Series in Informatics (OASIcs)}, pages 6:1--6:9, Dagstuhl,
  Germany, 2018. Schloss Dagstuhl--Leibniz-Zentrum fuer Informatik.

\bibitem[SA12]{MoreC3}
R.~Sharathkumar and Pankaj~K. Agarwal.
\newblock A near-linear time epsilon-approximation algorithm for geometric
  bipartite matching.
\newblock In {\em Proceedings of the Forty-fourth Annual ACM Symposium on
  Theory of Computing}, STOC '12, pages 385--394, New York, NY, USA, 2012. ACM.

\bibitem[SdGP{\etalchar{+}}15]{CV1}
Justin Solomon, Fernando de~Goes, Gabriel Peyr{\'e}, Marco Cuturi, Adrian
  Butscher, Andy Nguyen, Tao Du, and Leonidas Guibas.
\newblock Convolutional wasserstein distances: Efficient optimal transportation
  on geometric domains.
\newblock {\em ACM Trans. Graph.}, 34(4):66:1--66:11, July 2015.

\bibitem[SR]{OTStats1}
Gábor~J. Székely and Maria~L. Rizzo.
\newblock Testing for equal distributions in high dimensions.
\newblock {\em InterStat}, page 2004.

\end{thebibliography}

\end{document}